\begin{document}

\title{An $O(n^2)$ algorithm for Many-To-Many Matching of Points with Demands in One Dimension
}


\author{Fatemeh Rajabi-Alni         \and
        Alireza Bagheri 
}


\institute{F. Rajabi-Alni \at
              Department of Computer Engineering and IT,\\Amirkabir University of Technology, Tehran, Iran.
              \email{f.rajabialni@aut.ac.ir \\ (fatemehrajabialni@yahoo.com)}
           \and
           A. Bagheri \at
              Department of Computer Engineering and IT,\\Amirkabir University of Technology, Tehran, Iran.
}

\date{Received: date / Accepted: date}

\maketitle

\begin{abstract}

Given two point sets $S$ and $T$, we study the many-to-many matching with demands problem (MMD problem) which is a generalization of the many-to-many matching problem (MM problem). In an MMD, each point of one set must be matched to a given number of the points of the other set (each point has a demand). In this paper, we consider a special case of MMD problem, the one-dimensional MMD (OMMD), where the input point sets $S$ and $T$ lie on the line. In OMMD problem, the cost of matching a pair of points is equal to the distance between the two points. We present the first $O\left(n^2\right)$ time algorithm for computing an OMMD between $S$ and $T$, where $\left|S\right|+\left|T\right|=n$.
\keywords{Many-to-many point matching \and One dimensional point-matching \and points with demands}
\end{abstract}

\section{Introduction}
\label{intro}

Suppose we are given two point sets $S$ and $T$, a \textit {many-to-many matching} (MM) between $S$ and $T$ assigns each point of one set to one or more points of the other set \cite{ColanDamian}. Eiter and Mannila \cite{Eiter} solved the MM problem using the Hungarian method in $O(n^3)$ time. Finally, Colannino et al. \cite{ColanDamian} presented an $O(n \log {n})$-time dynamic programming solution for finding an MM between two sets on the real line. The matching has different applications such as computational biology \cite{Ben-Dor}, operations research \cite{Burkard}, pattern recognition \cite{Buss}, and computer vision \cite{Fatih}.

A general case of MM problem is \textit {the limited capacity many-to-many matching problem} (LCMM) where each point has a capacity. Schrijver \cite{Schrijver} proved that a minimum-cost LCMM can be found in strongly polynomial time. A special case of the LCMM problem is that in which both $S$ and $T$ lie on the real line. Rajabi-Alni and Bagheri \cite{Rajabi-Alni} proposed an $O(n^2)$ time algorithm for the one dimensional minimum-cost LCMM.

In this paper we consider another generalization of the MM problem, where each point has a \textit{demand}, that is each point of one set must be matched to a given number of the other set. Let $S=\{s_1,s_2,\dots,s_y\}$ and $T=\{t_1,t_2,\dots,t_z\}$. We denote the demand sets of $S$ and $T$ by $D_S=\{\alpha_1,\alpha_2,\dots,\alpha_y\}$ and $D_T=\{\beta_1,\beta_2,\dots,\beta_z\}$, respectively. In a many-to-many matching with demand (MMD), each point $s_i \in S$ must be matched to $\alpha_i$ points in $T$ and each point $t_i \in T$ must be matched to $\beta_i$ points in $S$. We denote the demand of each point $a \in S \cup T$ by $demand(a)$. We study one dimensional MMD (OMMD), where $S$ and $T$ lie on the line and propose an $O(n^2)$ algorithm for finding a minimum cost OMMD.

\section{Preliminaries}
\label{PreliminSect}
In this section, we proceed with some useful definitions and assumptions. Fig. \ref{fig:1} provides an illustration of them. Let $S=\{s_i \  for \  1 \le i \le y\}$ and $T=\{t_i \  for  \ 1 \le i \le z\}$. We denote the elements in $S$ in increasing order by
$(s_1, . . . , s_y )$, and the elements in $T$ in increasing order by $(t_1, . . . , t_z )$. Let $s_1$ be the smallest point in $S \cup T$. Let $S \cup T$ be partitioned into maximal subsets $A_0,A_1,A_2,\dots $ alternating between subsets in $S$ and $T$ such that all points in $A_i$ are smaller than all points in $A_{i+1}$ for all $i$: the point of highest coordinate in $A_i$ lies to the left of the point of lowest coordinate in $A_{i+1}$ (Fig. \ref{fig:1}).

Let $A_w=\{a_1,a_2,\dots,a_s\}$ with $a_1< a_2<\dots<a_s$ and $A_{w+1}=\{b_1,b_2,\dots,b_t\}$ with $b_1< b_2<\dots<b_t$. We denote $|b_1-a_i|$ by $e_i$, $|b_i-b_1|$ by $f_i$. Obviously $f_1=0$. Note that $a_0$ represents the largest point of $A_{w-1}$ for $w>0$. In an OMMD, a point matched to at least its demand number of points is a \textit{satisfied} point.

 \begin{figure}
\vspace{-7cm}
\hspace{-13cm}
\resizebox{3\textwidth}{!}{%
  \includegraphics{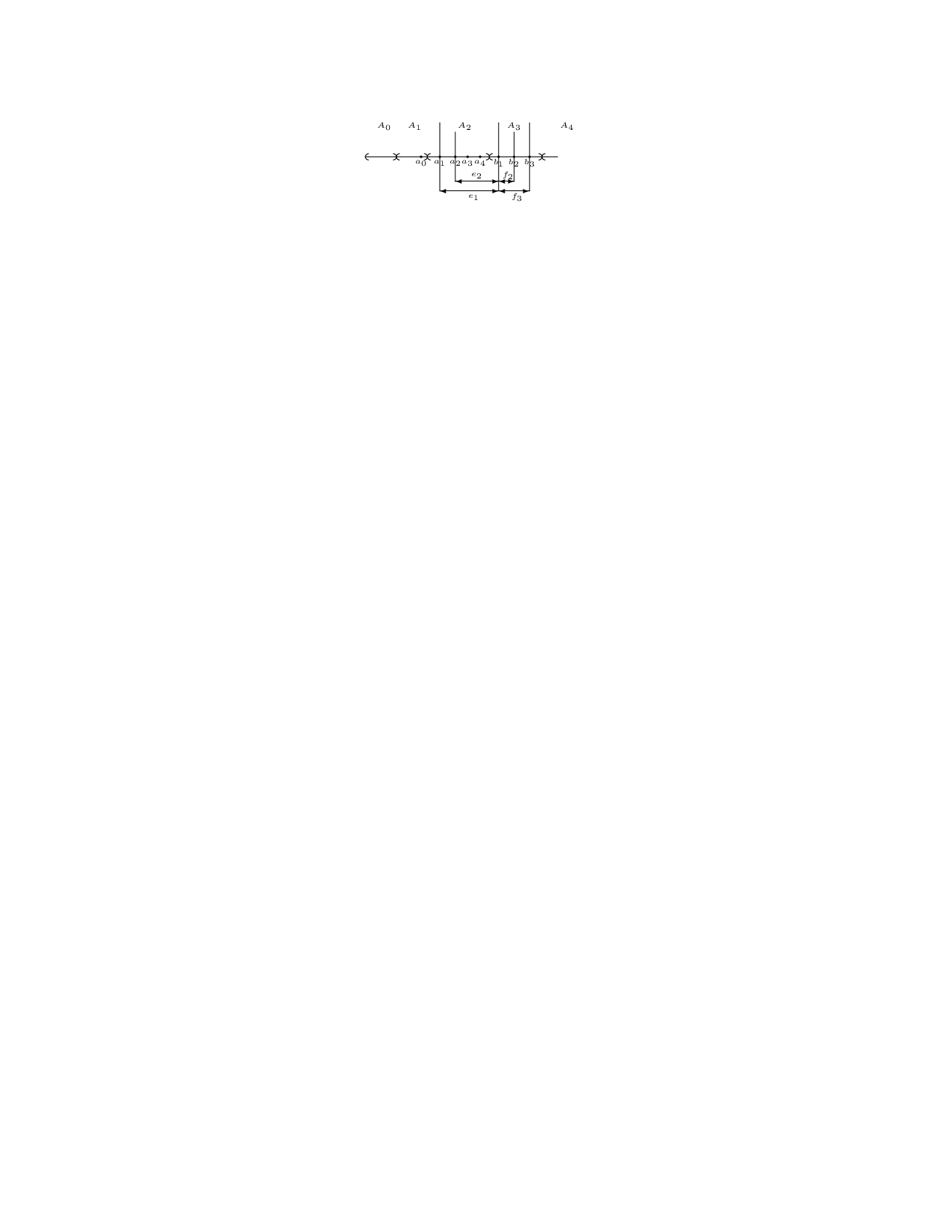}
}
\vspace{-40cm}
\caption{The notation and definitions in partitioned point set $S \cup T$.}
\label{fig:1}       
\end{figure}

First, we briefly describe the algorithm in \cite{ColanDamian}. The running time of their algorithm is $O(n \log n)$ and $O(n)$ for the unsorted and sorted point sets $S$ and $T$, respectively. We denote by $C(q)$ the cost of a minimum MM for the set of the points $\left\{p\in S\cup T\right|p\leq q\}$. The algorithm in \cite{ColanDamian} computes $C(q)$ for all points $q$ in $S\cup T$. Let $m$ be the largest point in $S \cup T$, then the cost of the minimum MM between $S$ and $T$ is equal to $C(m)$.


 \begin{figure}
\vspace{-6cm}
\hspace{-12cm}
\resizebox{3\textwidth}{!}{%
  \includegraphics{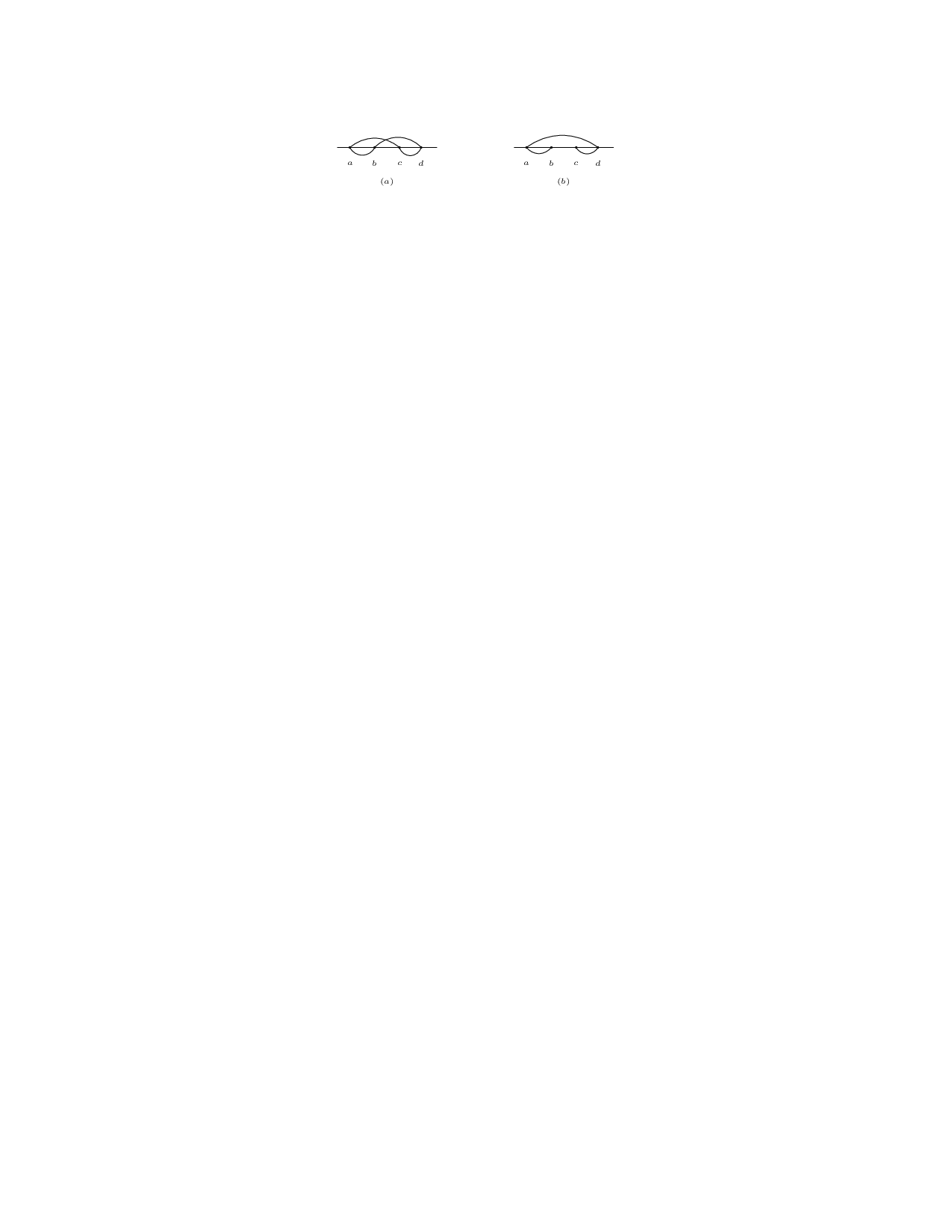}
}
\vspace{-40.5cm}
\caption{Suboptimal matchings. (a) $(a,c)$ and $(b,d)$ do not both belong to an optimal matching. (b) $(a,d)$ does not belong to an optimal matching. }
\label{fig:2}       
\end{figure}

\begin{lemma}
\label{lem3}
Let $b<c$ be two points in $S$, and $a<d$ be two points in $T$ such that $a\leq b<c\leq d$. Then a minimum cost many-to-many matching that contains $(a,c)$ does not contain $(b,d)$, and vice versa (Fig. \ref{fig:2}(a)) \cite{ColanDamian}.
\end{lemma}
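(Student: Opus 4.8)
The plan is to argue by contradiction with a standard \emph{uncrossing} (exchange) argument: the pairs $(a,c)$ and $(b,d)$ ``cross'' on the line, so replacing them by the ``parallel'' pairs $(a,b)$ and $(c,d)$ should strictly decrease the total cost while preserving feasibility, contradicting minimality.

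So suppose $M$ is a minimum cost many-to-many matching that contains both $(a,c)$ and $(b,d)$. First I would check feasibility of the swapped matching $M'=\bigl(M\setminus\{(a,c),(b,d)\}\bigr)\cup\{(a,b),(c,d)\}$. Deleting $(a,c)$ and $(b,d)$ reduces the number of partners of each of $a,b,c,d$ by one and leaves every other point untouched; adding $(a,b)$ and $(c,d)$ then restores one partner to each of $a,b,c,d$. Hence every point of $S\cup T$ still has at least one partner in $M'$, so $M'$ is again a valid many-to-many matching. (If one of the ``new'' pairs, say $(a,b)$, already lies in $M$, then instead set $M'=\bigl(M\setminus\{(a,c),(b,d)\}\bigr)\cup\{(c,d)\}$: now $a$ retains $b$ as a partner and $b$ retains $a$, so feasibility still holds and the cost drop below is only larger.)

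Next I would compare costs. Since the cost of a pair equals the distance between its endpoints and $a\le b<c\le d$, we have $|a-c|=c-a$, $|b-d|=d-b$, $|a-b|=b-a$ and $|c-d|=d-c$, whence
\[
\bigl(|a-c|+|b-d|\bigr)-\bigl(|a-b|+|c-d|\bigr)=(c-a)+(d-b)-(b-a)-(d-c)=2(c-b)>0,
\]
because $b<c$. Thus $\mathrm{cost}(M')=\mathrm{cost}(M)-2(c-b)<\mathrm{cost}(M)$, contradicting the minimality of $M$. Therefore no minimum cost many-to-many matching can contain both $(a,c)$ and $(b,d)$, which is exactly the claim; the ``vice versa'' part is the same statement read symmetrically.

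I expect the only delicate point to be the feasibility bookkeeping — confirming that deleting the two crossing pairs never strands a point with no partner, and correctly treating the degenerate subcase in which a replacement pair already occurs in $M$. Once that is handled as above, the exchange step is immediate; it is essentially the one-dimensional instance of the fact that an optimal transport/matching on the line has no crossings.
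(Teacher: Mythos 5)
Your proposal is correct and is essentially the argument the paper relies on: the paper cites this lemma from Colannino et al.\ without reproving it, but the identical uncrossing exchange (delete $(a,c)$ and $(b,d)$, insert $(a,b)$ and $(c,d)$, cost drops by $2(c-b)>0$) is exactly what the paper uses for the analogous Lemma~\ref{lem6}. Your extra care about feasibility and the degenerate subcase where a replacement pair already lies in $M$ is a welcome tightening, not a deviation.
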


\begin{lemma}
\label{lem1}
Let ${b,d} \in T$ and ${a,c} \in S$ with $a<b<c<d$. Then, a minimum cost many-to-many matching contains no pairs $(a,d)$ (Fig. \ref{fig:2}(b)) \cite{ColanDamian}.
\end{lemma}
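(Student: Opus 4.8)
The plan is to argue by contradiction using a single local exchange. Suppose some minimum-cost many-to-many matching $M$ between $S$ and $T$ contains the pair $(a,d)$. I would then form a new set of pairs
$$M' = \bigl(M \setminus \{(a,d)\}\bigr) \cup \{(a,b),(c,d)\},$$
where any pair already present in $M$ is simply kept. The goal is to show that $M'$ is again a many-to-many matching between $S$ and $T$ and that $\mathrm{cost}(M') < \mathrm{cost}(M)$, contradicting the optimality of $M$.

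First I would verify that $M'$ is a legal many-to-many matching, i.e.\ that every point of $S\cup T$ is still matched to at least one point of the other set. The only points whose incident pairs change are $a,b,c,d$: the point $a$ loses the partner $d$ but gains the partner $b$; the point $d$ loses the partner $a$ but gains the partner $c$; the points $b$ and $c$ only acquire an extra partner. Hence every point remains matched, so $M'$ is a valid many-to-many matching. This is precisely why the exchange repairs the coverage of $a$ and $d$ with the fresh pairs $(a,b)$ and $(c,d)$: we never need to know, or disturb, the current partners of $b$ and $c$.

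Next I would bound the cost. Deleting $(a,d)$ lowers the cost by $|a-d|$, while inserting $(a,b)$ and $(c,d)$ raises it by at most $|a-b|+|c-d|$ (by exactly that amount if neither pair was already in $M$, and by less otherwise), so
$$\mathrm{cost}(M') \le \mathrm{cost}(M) - |a-d| + |a-b| + |c-d|.$$
Using the ordering $a<b<c<d$ on the line we have $|a-d|=d-a$, $|a-b|=b-a$, $|c-d|=d-c$, whence $-|a-d|+|a-b|+|c-d| = b-c < 0$. Therefore $\mathrm{cost}(M') < \mathrm{cost}(M)$, contradicting the minimality of $M$, and no minimum-cost many-to-many matching contains $(a,d)$.

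The argument is short, so there is no real obstacle; the only point requiring care is the coverage check for $M'$, together with the choice to re-cover $a$ and $d$ via the short pairs $(a,b)$ and $(c,d)$ rather than trying to reroute through the (unknown) partners of $b$ and $c$ — the latter would force an awkward case analysis on the positions of those partners, with several ties arising. One could alternatively phrase the proof as an uncrossing exchange in the spirit of Lemma~\ref{lem3}, noting that $b$ must be matched to some $s\in S$ and $c$ to some $t\in T$ and splitting on where $s$ and $t$ lie, but the direct exchange above makes that unnecessary.
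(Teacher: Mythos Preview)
Your argument is correct and matches the paper's proof exactly: the paper also assumes for contradiction that a minimum-cost MM $M$ contains $(a,d)$, removes this pair and adds $(a,b)$ and $(c,d)$ to obtain an MM $M'$ of strictly smaller cost. Your write-up simply supplies the coverage check and the explicit cost computation $-|a-d|+|a-b|+|c-d|=b-c<0$ that the paper leaves implicit.
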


Towards a contradiction, suppose that $M$ is a minimum cost MM that contains such a pair $(a,d)$ (Fig. \ref{fig:2}(b)). We can construct a new MM, denoted by $M'$, by removing the pair $(a,d)$ from $M$ and adding the pairs $(a,b)$ and $(c,d)$. The cost of $M'$ is smaller than the cost of $M$. This is a contradiction to our assumptions that $M$ is a mnimum-cost MM.

\begin{corollary}
\label{consecutive}
Let $M$ be a minimum-cost MM. For any matching $(a,d) \in M$ with $a < d$, we have $a \in A_i$ and $d \in A_{i+1}$ for some $i \geq 0$.
\end{corollary}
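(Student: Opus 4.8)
The plan is to reduce the statement to Lemma~\ref{lem1}. First I would observe that since $M$ is a matching between $S$ and $T$, the pair $(a,d)$ has one endpoint in $S$ and one in $T$; in particular $a$ and $d$ lie in distinct blocks of the partition, say $a\in A_i$ and $d\in A_j$. Because $a<d$ and every point of $A_k$ precedes every point of $A_{k+1}$, we must have $i<j$; and because consecutive blocks alternate between $S$ and $T$ while $a$ and $d$ belong to different sets, $j-i$ is odd. Hence either $j=i+1$, which is the desired conclusion, or $j\ge i+3$.

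Next I would derive a contradiction in the remaining case $j\ge i+3$. Assume without loss of generality that $a\in S$ (the case $a\in T$ is symmetric, with the roles of $S$ and $T$ interchanged in Lemma~\ref{lem1}). Then $A_i\subseteq S$, $A_{i+1}\subseteq T$, $A_{i+2}\subseteq S$, and $A_j\subseteq T$, and all four blocks are nonempty since the partition consists of maximal (hence nonempty) subsets. Choose any $b\in A_{i+1}$ and any $c\in A_{i+2}$. From the ordering of the blocks we get $a<b<c<d$, with $a,c\in S$ and $b,d\in T$. Lemma~\ref{lem1} then asserts that a minimum-cost many-to-many matching contains no such pair $(a,d)$, contradicting $(a,d)\in M$. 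Therefore $j=i+1$, which is exactly the statement of the corollary.

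I expect the only point needing any care to be the parity bookkeeping: one must use that $a$ and $d$ lie in different sets (guaranteed because $M$ matches $S$ with $T$) to rule out $j=i+2$, so that whenever $(a,d)$ fails to join consecutive blocks there is genuinely room for both an intervening $T$-block and an intervening $S$-block between $a$ and $d$ — which is precisely the configuration Lemma~\ref{lem1} forbids. Everything else follows immediately from the definition of the partition into maximal alternating blocks.
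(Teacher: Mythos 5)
Your argument is correct and matches the paper's intent: the corollary is stated there as an immediate consequence of Lemma~\ref{lem1}, and your reduction --- using the alternation of the blocks to force $j-i$ odd, then extracting $b\in A_{i+1}$, $c\in A_{i+2}$ to invoke Lemma~\ref{lem1} when $j\ge i+3$ --- is exactly the missing bookkeeping the paper leaves implicit. No issues.
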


\begin{lemma}
\label{lem4}
In a minimum cost MM, each $A_i $ for all $i>0$ contains a point $q_i$, such that all points $a \in A_i$ with $a<q_i$ are matched with the points in $A_{i-1}$ and all points $a' \in A_i$ with $q_i<a'$ are matched with the points in $A_{i+1}$ \cite{ColanDamian}.
\end{lemma}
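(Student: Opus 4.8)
The plan is to argue by exchange, exactly in the spirit of Lemma \ref{lem1} and Corollary \ref{consecutive}. Fix a minimum-cost MM $M$ and fix $i>0$. By Corollary \ref{consecutive}, every point $a\in A_i$ is matched only to points of $A_{i-1}$ (points smaller than $a$) and/or points of $A_{i+1}$ (points larger than $a$); no other blocks can occur. Call a point of $A_i$ \emph{left-looking} if it has at least one partner in $A_{i-1}$ and \emph{right-looking} if it has at least one partner in $A_{i+1}$. The whole content of the lemma is the claim that, after possibly rearranging $M$ into another optimal matching, the left-looking points of $A_i$ form a prefix $a_1,\dots$ and the right-looking points form the complementary suffix, with at most one index $q_i$ where the two regimes meet; then one sets $q_i$ to be that boundary point.

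The key step is to rule out a "crossing" inside $A_i$: suppose $a<a'$ are both in $A_i$, that $a$ is matched to some $d\in A_{i+1}$, and that $a'$ is matched to some $c\in A_{i-1}$. Then $c<a<a'<d$ with $c,d$ on one side and $a,a'$ on the other, which is precisely the forbidden configuration of Lemma \ref{lem3} (with the roles of the two sets swapped): the pairs $(c,a')$ and $(a,d)$ cannot both lie in a minimum-cost matching. Hence no such crossing exists, so once some point of $A_i$ looks right, every larger point of $A_i$ looks only right; equivalently the left-looking points precede the right-looking points. First I would state this monotonicity carefully, then handle the one subtlety: a single point $q_i$ may be simultaneously left-looking and right-looking (it splits its demand between $A_{i-1}$ and $A_{i+1}$), and this is the $q_i$ named in the statement; all points strictly below it are purely left-looking and all points strictly above it are purely right-looking. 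If no point is shared, any point on the boundary between the two groups serves as $q_i$ (the inequalities in the statement are strict, so a purely left-looking $q_i$ with the next point purely right-looking also works, as does the symmetric choice).

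I also need the boundary cases: if every point of $A_i$ is matched into $A_{i-1}$, take $q_i=a_s$ (the largest point of $A_i$); if every point is matched into $A_{i+1}$, take $q_i=a_1$. These are consistent with the displayed condition since then one of the two sets of points $\{a<q_i\}$ or $\{q_i<a'\}$ is empty. The main obstacle I anticipate is not the exchange argument itself — that is a direct reuse of Lemma \ref{lem3} — but pinning down the exact meaning of $q_i$ so that the strict inequalities "$a<q_i$" and "$q_i<a'$" in the statement are honored in every case, i.e. making sure $q_i$ is allowed to be a point that is itself matched in both directions, and that the claim is about the \emph{existence} of an optimal $M$ of this form rather than about an arbitrary optimal $M$. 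Once the crossing is excluded, the remaining bookkeeping is routine: repeatedly apply the exchange to remove crossings, note each exchange does not increase cost and strictly reduces the number of crossing pairs, and conclude by induction on that count.
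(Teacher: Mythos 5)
Your proposal is correct and follows essentially the same route as the paper: restrict all partners to $A_{i-1}\cup A_{i+1}$ via Corollary \ref{consecutive}, then rule out the crossing configuration ($a<a'$ in $A_i$ with $a$ matched into $A_{i+1}$ and $a'$ matched into $A_{i-1}$) by invoking the exchange of Lemma \ref{lem3}. The paper's proof is exactly this two-line argument; your additional bookkeeping about the choice of $q_i$ and the boundary cases is a harmless elaboration.
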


\begin{proof}
By Corollary \ref{consecutive}, each pint $a \in A_i$ must be matched with a point $b$ such that either $b \in A_{i-1}$ or $b \in A_{i+1}$. Let $b \in A_{i-1}$, $b' \in A_{i+1}$, and $a,a' \in A_i$ with $b<a<a'<b'$. By way of contradiction, suppose that $M$ is a minimum cost MM containing both $(b,a')$ and $(a,b')$. Contradiction with Lemma \ref{lem3}.

\end{proof}

The point $q_i$ defined in Lemma \ref {lem4} is called \textit{the separating point}. In fact, the aim of their algorithm is exploring the separating points of each partition $A_i$ for all $i$. They assumed that $C(p)=\infty$ for all points $p \in A_0$. Let $A_w=\{a_1,a_2,\dots,a_s\}$ and $A_{w+1}=\{b_1,b_2,\dots,b_t\}$. Their dynamic programming algorithm computes $C(b_i)$ for each $b_i \in A_{w+1}$, assuming that $C(p)$ has been computed for all points $p < b_i$ in $S\cup T$. Depending on the values of $w$, $s$, and $i$ there are five possible cases.

\begin{description}
\item[Case 0:] $w=0$. In this case there are two possible situations.
\begin{itemize}
\item $i\leq s$. We compute the optimal matching by assigning the first $s-i$ elements of $A_0$ to $b_1$ and the remaining $i$ elements pairwise (Fig. \ref{fig:3}(a)). So we have
\[C\left(b_i\right)=\sum^s_{j=1}{e_j}+\sum^i_{j=1}{f_j}.\]

\item $i>s$. The cost is minimized by matching the first $s$ points in $A_1$ pairwise with the points in $A_0$, and the remaining $i-s$ points in $A_1$ with $a_s$ (Fig. \ref{fig:3}(b)). So we have
\[C\left(b_i\right)=\left(i-s\right)e_s+\sum^s_{j=1}{e_j}+\sum^i_{j=1}{f_j}.\]

\end{itemize}

\begin{figure}
\vspace{-3cm}
\hspace{-6cm}
\resizebox{2\textwidth}{!}{%
  \includegraphics{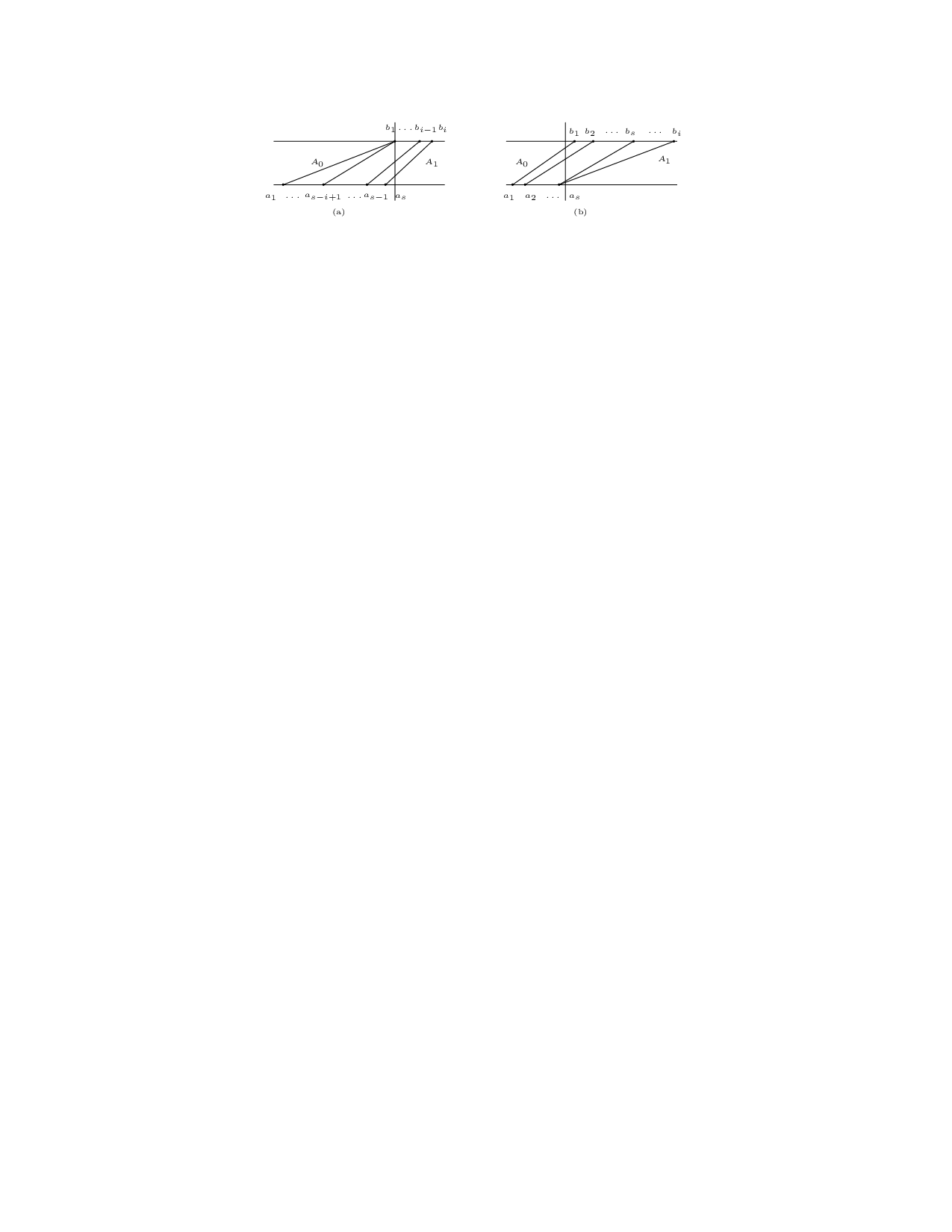}
}
\vspace{-26cm}
\caption{Case 0: $w=0$. (a) $1 \leq i \leq s$. (b) $s<i \leq t$.}
\label{fig:3}       
\end{figure}

\item[Case 1:] $w>0,s=t=1$. Fig. \ref{fig:4}(a) provides an illustration of this case. By Lemma \ref{lem4}, $b_1$ must be matched with the point $a_1$. Therefore, we can omit the point $a_1$, unless it reduces the cost of $C\left(b_1\right)$.
\newline
\newline

\item[Case 2:] $w>0,s=1,t>1$. By Lemma \ref{lem4}, we can minimize the cost of the many-to-many matching by matching all points in $A_{w+1}$ with $a_1$ as presented in Fig. \ref{fig:4}(b). As case 1, $C\left(b_i\right)$ includes $C\left(a_1\right)$ if $a_1$ covers other points in $A_{w-1}$; otherwise, $C\left(b_i\right)$ includes $C\left(a_0\right)$.
\newline
\newline
\item[Case 3:] $w>0,s>1,t=1$. By Lemma \ref{lem4}, we should find the point $a_i\in A_w$ such that all points $a_j \in A_w$ with $a_j<a_i$ are matched to points in $A_{w-1}$ and all points $a_k \in A_w$ with $a_i<a_k$ are matched to points in $A_{w+1}$ (Fig. \ref{fig:4}(c)).
\newline
\newline
\begin{figure}
\vspace{-5cm}
\hspace{-6cm}
\resizebox{2\textwidth}{!}{%
  \includegraphics{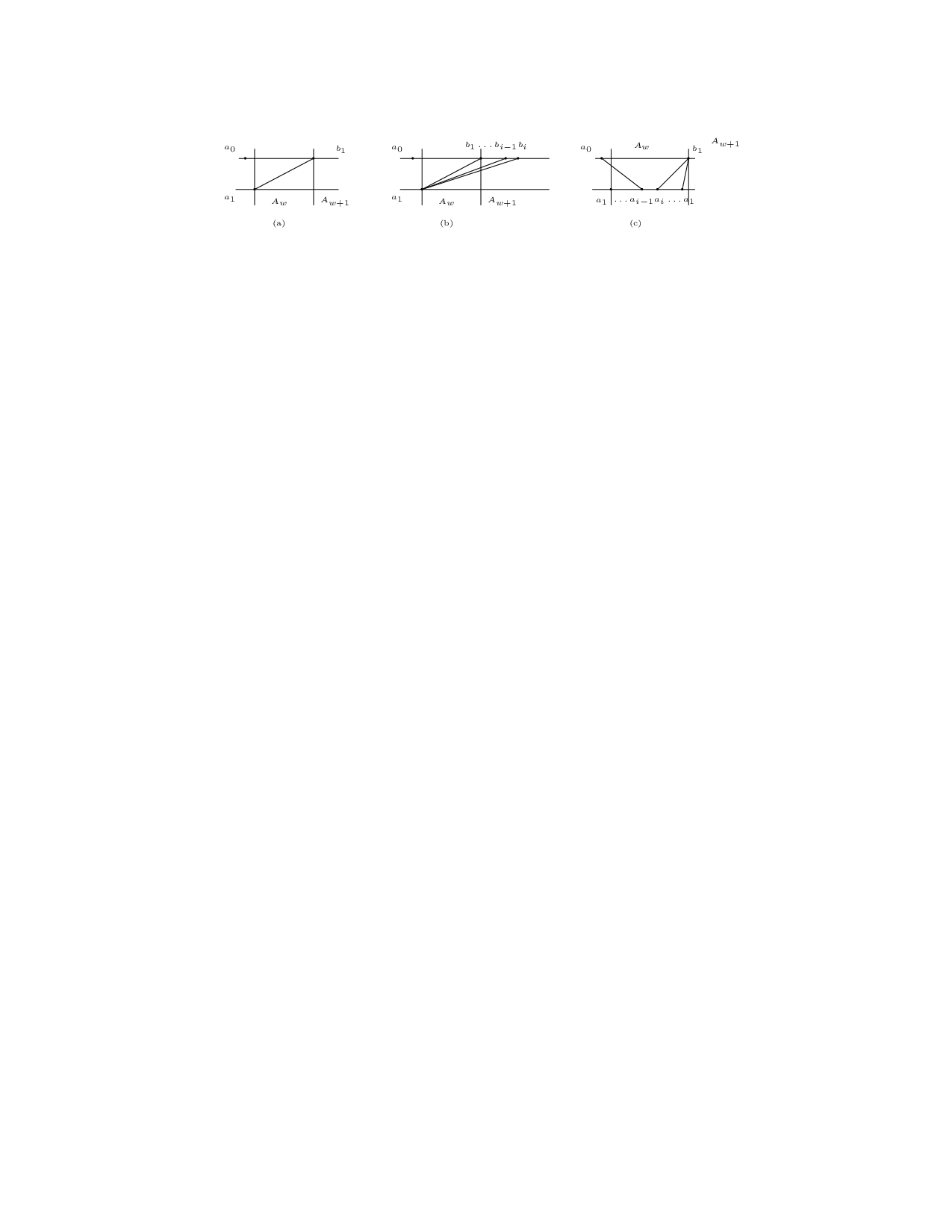}
}
\vspace{-26cm}
\caption{(a). Case 1: $w>0,s=t=1$. (b) Case 2: $w>0,s>1,t=1$. (c) Case 3: $w>0,s>1,t=1$.}
\label{fig:4}       
\end{figure}

\item[Case 4:] $w>0,s>1,t>1$. In this case, we should find the point $q$ that splits $A_w$ to the left and right. Let $X\left(b_i\right)$ be the cost of connecting $b_1,b_2,\dots,b_i$ to at least $i+1$ points in $A_w$ (Fig. \ref{fig:5}(a)). Let $Y\left(b_i\right)$ be the cost of connecting $b_1,b_2,\dots,b_i$ to exactly $i$ points in $A_w$ (Fig. \ref{fig:5}(b)). Finally, let $Z\left(b_i\right)$ denote the cost of connecting $b_1,b_2,\dots,b_i$ to fewer than $i$ points in $A_w$, as depicted in Fig. \ref{fig:5}(c).

Then, we have:
\[C\left(b_i\right)=\left\{
\begin{array}{lr}
\min(X\left(b_i\right),Y\left(b_i\right),Z\left(b_i\right)) & 1\leq i<s
 \\
 \min(Y\left(b_s\right),Z\left(b_s\right)) & i=s
 \\
 C\left(b_{i-1}\right)+e_s+f_i & s<i\leq t
 \end{array}
\right..\]

\end{description}

\begin{figure}
\vspace{-4cm}
\hspace{-9cm}
\resizebox{2.3\textwidth}{!}{%
  \includegraphics{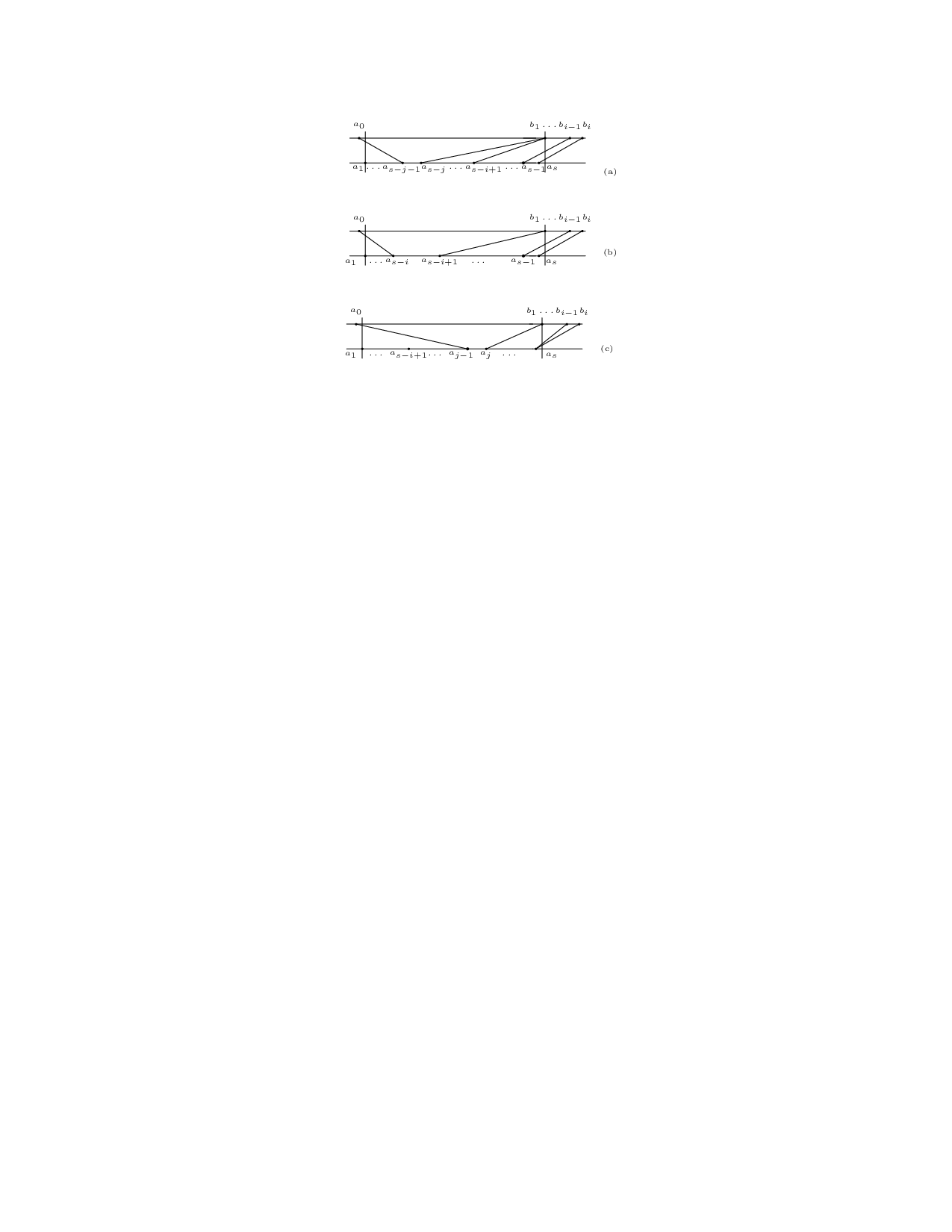}
}
\vspace{-25.5cm}
\caption{Case 4: $w>0,s>1,t>1$. (a) Computing $X(b_i)$. (b) Computing $Y(b_i)$. (c) Computing $Z(b_i)$.}
\label{fig:5}       
\end{figure}

We present another simpler algorithm for Case 4 by Lemma \ref{lemma-new}.

\begin{lemma}
\label{lemma-new}
Give $A_w=\{a_1,a_2,\dots,a_s\}$ with $a_1< a_2<\dots<a_s$ and $A_{w+1}=\{b_1,b_2,\dots,b_t\}$ with $b_1< b_2<\dots<b_t$. In a minimum cost MM, for $i \geq 2$ and $w>0$ we have:

\[C(b_i)=\ \left\{
 \begin{array}{ll}
C(b_{i-1})+f_i-f_1  & if\ deg(b_1)>1
 \\

C(b_{i-1})+e_s+f_i & if\ deg(b_1)=1 \ and \ deg(a_s)>1
 \\
\min(C(b_{i-1})+e_{s}+f_i,C(b_{i-1})+e_{s-i+1}+f_i-C(a_{s-i+1})+C(a_{s-i})), & if\ deg(b_1)=1 \ and \ deg(a_s)=1
 \\

 \end{array}
\right..\]

\end{lemma}

\begin{proof}

We first insert the point $b_1$, and then by Lemma \ref{lem4} we find the separating point $q_i$. Let $q_i=a_h$, that is we get the minimum cost by matching $a_h, \dots, a_s$ to $b_1$.

For the point $b_i$ there are two cases for $2\geq i$:
\begin{itemize}
\item $deg(b_1)>1$. Assume that $b_1$ is matched to $a_j \dots a_s$. An optimal MM can be computed by matching $b_i$ to one of the points $a_j \dots a_s$, say $a_k$. Since if we assume that $b_i$ is matched to $a_s$ in the best case by contradiction. Then, by removing $(a_k,b_1)$ and $(a_s,b_i)$ and adding $(a_k,b_i)$, we get an MM with smaller cost.

\item $deg(b_1)=1 $. In this case, there are two cases:

\begin{itemize}

\item $deg(a_s)>1$. Let $y=deg(a_s)-1$. In this case, $b_{i-y},\dots,b_{i-1}$ are matched to $a_s$. It is proved by induction on $i$ by Lemma \ref{cccc}. for $i=2$ it is obviously true.

\begin{lemma}
\label{cccc}
If $b_{i-1}$ selects $a_s$ as the separating point, then $a_s$ is also the separating point (the optimal point) for $b_{i} \dots b_t$ for $i\ge 3$,.
\end{lemma}

\begin{proof}

If $b_{i-1}$ selects $a_s$ as the separating point, then:

$$C(b_{i-2})+e_s+f_{i-1}<C(b_{i-2})+f_{i-1}+e_k+C(a_{k-1})-C(a_{k}),$$
and so:

$$e_s<e_k+C(a_{k-1})-C(a_{k}),$$
where $a_k$ is the largest point in $A_w$ that is matched to a point of $A_{w-1}$.

If we add $C(b_{i-1})$ and $f_{i}$ we have:

$$C(b_{i-1})+e_s+f_{i}<C(b_{i-1})+f_{i}+e_k+C(a_{k-1})-C(a_{k})$$, so $a_s$ is also the separating point (the optimal point) for $b_{i}$.

\end{proof}
\item $deg(a_s)=1$ and $deg(a_{s-i+1})=1$ and $a_{s-i+1}$ is matched to $A_{w-1}$. Obviously, all $i-1$ points in $b_1,\dots,b_{i-1}$ are matched to $a_s, \dots, a_{s-i+2}$. In this case we have:
$$C(b_{i})=\min(C(b_{i-1})+e_{s}+f_i,C(b_{i-1})+e_{s-i+1}+f_i-C(a_{s-i+1})+C(a_{s-i})).$$ That is, we must select if $b_i$ is matched to $a_s$ or $a_{s-i+1}$.
\item $deg(a_s)=1$ and $deg(a_{s-i+1})>1$ and $a_{s-i+1}$ is matched to $A_{w-1}$. In this case $b_i$ is matched to $a_s$.
\end{itemize}

\end{itemize}

\end {proof}

Similarly, we also present a more simpler algorithm for Case 4 in \cite{Rajabi-Alni} in the following.

For the point $b_i$ there are two cases for $2\geq i$:
\begin{itemize}
\item There exists at least one point $b_k$ in $A_{w+1}$ with $1 \leq k<i$ such that $deg(b_k)>1$, that is $b_k$ is matched to $a_j \dots a_y$ of $A_w$. An optimal MM with limited capacities can be computed by matching $b_i$ to one of the points $a_j \dots a_y$, say $a_q$. So, we examine $b_1$ to $b_{i-1}$ to find the first point $b_k$ that is matched to more than one point. We match $b_i$ to $a_q$, and remove $(b_k,a_q)$ form MM with limited capacities.

\item $deg(b_q)=1$ for all the points $b_q$ in $A_{w+1}$ with $1 \leq q<i$ . In this case, there are two cases:

\begin{itemize}

\item For the largest point in $a_1 , a_2, \dots, a_s$, called $a_q$, that is not matched to $b_i$ we have $1<deg(a_q)<capacity(a_q)$. Since $1<deg(a_q)$, then
    \begin{itemize}
      \item either $a_q$ is the optimal point for more than one point of $A_{w-1}$, and the capacities of the point $a_1, \dots, a_{q-1}$ is saturated by other points from $A_{w-1}$.
      \item Or $a_q$ is the optimal point for more than one point of $A_{w+1}$, especially $b_{i-1}$.
      \item Or both.
      \end{itemize}

 So, $b_i$ is matched to $a_q$. It can be proved by induction on $i$ by Lemma \ref{cc888}. for $i=2$ it is obviously true.

\begin{lemma}
\label{cc888}
If $b_{i-1}$ selects $a_q$ as the separating point, then $a_q$ is also the separating point (the optimal point) for $b_{i}$ for $i\ge 3$,.
\end{lemma}

\item $deg(a_s)=1$ and $capacity(a_s)>1$ and $deg(a_{s-i+1})=1$. Obviously, all $i-1$ points in $b_1,\dots,b_{i-1}$ are matched to $a_s, \dots, a_{s-i+2}$. In this case we examine that if $b_i$ is matched to $a_s$ or $a_{s-i+1}$.

\end{itemize}

\end{itemize}

\section{An Algorithm for OMMD problem}
\label{OMAsection}
In this section, we present an $O(n^2)$ algorithms for finding an OMMD between two sets $S$ and $T$ lying on the line. Our recursive dynamic programming algorithm is based on the algorithm of Colannino et al. \cite{ColanDamian} and Rajabi and Bagheri \cite{Rajabi-Alni}. We begin with some useful lemmas.

\begin{lemma}
\label{lem5}
Let $a\in S,b\in T$ and $c\in S,d\in T$ such that $a\leq b<c\leq d$. Let $M$ be an minimum-cost OMMD. If $(a,d) \in M$, then either $(a,b) \in M$ or $(c,d) \in M$ or both.
\end{lemma}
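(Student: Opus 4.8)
The plan is to argue by contradiction, using only \emph{degree-preserving} exchanges, i.e.\ swaps that remove some pairs and add the same number of pairs so that every point keeps its demand exactly. This is the same mechanism that proves Lemma~\ref{lem3}, and it stays valid for an OMMD precisely because it never changes any $demand(\cdot)$. So I would suppose $(a,d)\in M$ while $(a,b)\notin M$ and $(c,d)\notin M$, and try to exhibit a feasible matching of strictly smaller cost, contradicting the minimality of the given $M$.

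First I would pin down where the partners of $b$ and of $c$ can lie. Since $demand(b)\ge 1$ and $(a,b)\notin M$, the point $b$ has a partner $s'\in S$ with $s'\neq a$. Replacing $(a,d),(s',b)$ by $(a,b),(s',d)$ preserves all demands, and because $a\le b<d$ its cost change equals $|s'-d|-|s'-b|-(d-b)$, which is $\le 0$ and is \emph{strictly} negative as soon as $s'>b$. Minimality of $M$ therefore forces $s'\le b$ for \emph{every} partner $s'$ of $b$. The symmetric swap $(a,d),(c,t')\to(c,d),(a,t')$, for a partner $t'\neq d$ of $c$, has cost change $|a-t'|-|c-t'|-(c-a)$, which is strictly negative when $t'<c$; hence minimality forces $t'\ge c$ for every partner $t'$ of $c$.

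Now the two sides are separated, $s'\le b<c\le t'$, and the natural three-edge swap $(a,d),(s',b),(c,t')\to(a,b),(c,d),(s',t')$ once more preserves every demand; however, a direct computation shows that its cost change is \emph{exactly} $0$. This cost-neutrality is the crux: such a swap only produces an \emph{equal}-cost matching that contains $(a,b)$ and $(c,d)$, and does not by itself certify that the given $M$ already contains one of them. To obtain the statement as worded I would aim to rule out this tied configuration by sharpening the exchange — taking $s'$ to be the rightmost partner of $b$ and $t'$ the leftmost partner of $c$, and bringing the remaining partners of $a$ and of $d$ into the swap — so as to force one of the reconnected distances strictly down, which would contradict minimality unless $(a,b)$ or $(c,d)$ lies in $M$. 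Establishing that this strictly improving swap is always available (equivalently, that no minimum-cost $M$ can keep $(a,d)$ while avoiding both short edges) is the delicate heart of the argument and the step I expect to be the main obstacle.
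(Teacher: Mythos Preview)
Your proposal is incomplete by your own admission, and the gap you identify is real under the model you are implicitly adopting. The paper's argument, by contrast, is a one-line swap: assuming $(a,d)\in M$ but $(a,b)\notin M$ and $(c,d)\notin M$, delete $(a,d)$ and insert both $(a,b)$ and $(c,d)$. The cost drops by $(d-a)-\bigl((b-a)+(d-c)\bigr)=c-b>0$, contradicting minimality. That is exactly the mechanism of Lemma~\ref{lem1}, and the paper simply says so.

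The reason the paper can do this while you cannot is a modelling difference you have imposed on yourself. You insist on \emph{degree-preserving} exchanges because you read ``$s_i$ must be matched to $\alpha_i$ points'' as an exact constraint. In this paper demands are \emph{lower bounds}: a point is ``satisfied'' once it has been matched to its demand number of partners, and extra incident edges are permitted (this is the dual of the limited-capacity variant, where capacities are upper bounds). Under that reading, the non-degree-preserving swap above is feasible---$a$ and $d$ keep their degrees, while $b$ and $c$ each gain one incident edge, which only helps---and the contradiction is immediate.

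If one \emph{did} require exact demands, your zero-cost three-edge swap shows only that some optimum contains $(a,b)$ and $(c,d)$, not that every optimum does; the statement as worded would then be too strong, and your hoped-for ``sharpened'' strictly improving swap does not exist in general (take four equally spaced points with all demands equal to $1$). So the missing idea is not a cleverer exchange but the observation that in the OMMD model here you may add edges freely, which collapses the whole argument to the single swap above.
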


\begin{proof}
The proof of this lemma is essentially the same as the proof of Lemma \ref{lem1}.
\end{proof}

\begin{corollary}
\label{corlem5}
Let $a \in A_i$ and $d \in A_j$ for some $i \geq 0$. For any matching $(a,d)$ in an OMMD, if $j>i+1$ then either $(a,b) \in M$ for all $b \in A_{i+1} \cup A_{i+3}\cup  \dots \cup A_{j-2}$ or  $(a,b) \in M$ for all $a \in A_{i+2} \cup A_{i+4}\cup  \dots \cup A_{j-1}$.
\end{corollary}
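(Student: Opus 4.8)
The plan is to derive Corollary~\ref{corlem5} from Lemma~\ref{lem5} by induction on the gap $j-i$ (reading the second alternative as ``$(a',d)\in M$ for every $a'\in A_{i+2}\cup A_{i+4}\cup\dots\cup A_{j-1}$''). First a reduction: since $M$ matches points of $S$ with points of $T$ while $A_0,A_1,A_2,\dots$ alternate between $S$ and $T$, the edge $(a,d)$ forces $i$ and $j$ to have opposite parity, so $j>i+1$ in fact means $j\ge i+3$; and we may assume $a\in A_i\subseteq S$ and $d\in A_j\subseteq T$ (the complementary case $a\in T$, $d\in S$ is symmetric, using the mirror image of Lemma~\ref{lem5}), so that the ``odd'' sets $A_{i+1},A_{i+3},\dots,A_{j-2}$ lie in $T$, the ``even'' sets $A_{i+2},A_{i+4},\dots,A_{j-1}$ lie in $S$, and by the defining order of the partition every point of all these sets lies strictly between $a$ and $d$. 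For the base case $j=i+3$ there is one intermediate set on each side: if neither alternative held, some $b\in A_{i+1}$ would have $(a,b)\notin M$ and some $c\in A_{i+2}$ would have $(c,d)\notin M$; but then $a\le b<c\le d$ (as $A_{i+1}$ precedes $A_{i+2}$), contradicting Lemma~\ref{lem5}.

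For the inductive step ($j\ge i+5$) I would argue by contradiction, using extremal witnesses. Assuming both alternatives fail, let $b^{\ast}$ be the leftmost point of $A_{i+1}\cup A_{i+3}\cup\dots\cup A_{j-2}$ with $(a,b^{\ast})\notin M$ and $c^{\ast}$ the rightmost point of $A_{i+2}\cup A_{i+4}\cup\dots\cup A_{j-1}$ with $(c^{\ast},d)\notin M$. Lemma~\ref{lem5} applied to $(a,d)$ with the points $b^{\ast}$ and $c^{\ast}$ forbids $a\le b^{\ast}<c^{\ast}\le d$, hence $c^{\ast}<b^{\ast}$: every $d$-unmatched even point precedes every $a$-unmatched odd point. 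One then wants to isolate the portion of the alternating chain to the left of $b^{\ast}$, on which $a$ is matched to everything, or the portion to the right of $c^{\ast}$, on which $d$ is matched to everything, and feed the remaining short configuration into the induction hypothesis to recover the alternative that was supposed to fail.

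I expect the main obstacle to be exactly this last reduction. The inequality $c^{\ast}<b^{\ast}$ still allows a genuinely interleaved pattern --- the $a$-unmatched odd sets sitting on the right while the $d$-unmatched even sets sit on the left --- which Lemma~\ref{lem5} alone does not rule out, and a naive induction does not apply because the long edge $(a,d)$ is not replaced by a shorter one. Overcoming this seems to require looking at what $b^{\ast}$ and $c^{\ast}$ are themselves matched to and running a demand-preserving exchange argument: one cannot simply insert the pairs $(a,b^{\ast})$ and $(c^{\ast},d)$ as in the proof of Lemma~\ref{lem1}, because that changes the degrees. This is the step where I would concentrate the effort, and where the statement should be checked most carefully against small interleaved instances.
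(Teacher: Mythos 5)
Your reduction and your base case $j=i+3$ are correct, and that base case is in fact the only part of the corollary that follows from Lemma~\ref{lem5}: there the single $T$-type intermediate set $A_{i+1}$ precedes the single $S$-type set $A_{i+2}$, so any pair of failed witnesses automatically sits in the position $a\le b<c\le d$ that the lemma needs. The obstacle you isolate in the inductive step is not a technicality you failed to overcome --- it is fatal, and the ``interleaved'' configuration you worry about really does occur in minimum-cost OMMDs. Take $S=\{0,2,4\}$ with demands $2,1,1$ and $T=\{1,3,5\}$ with demands $1,1,2$, so that $A_k=\{k\}$ for $k=0,\dots,5$. The matching $M_1=\{(0,1),(0,5),(2,3),(4,5)\}$ meets every demand and costs $8$; a short case check on whether the edge $(0,5)$ is used (any feasible matching spends at least $1+3=4$ on edges at $0$ and at least $1+3=4$ on edges at $5$) shows no feasible matching costs less than $8$, so $M_1$ is optimal, as is the alternative $M_2=\{(0,1),(0,3),(2,5),(4,5)\}$. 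In $M_1$ the pair $(0,5)$ joins $A_0$ to $A_5$, yet $0$ is not matched to $3\in A_3$ and $5$ is not matched to $2\in A_2$, so neither alternative of the corollary holds. This is exactly your $c^{\ast}=2<b^{\ast}=3$ pattern, and it is consistent with Lemma~\ref{lem5} because the unmatched $S$-witness lies to the left of the unmatched $T$-witness.

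So the gap cannot be closed: the statement is false for $j\ge i+5$ under the ``for every minimum-cost OMMD'' reading, which is the only reading an exchange argument in the style of Lemma~\ref{lem1} could deliver. For what it is worth, the paper supplies no proof of this corollary at all --- it is asserted bare as a consequence of Lemma~\ref{lem5}, followed only by a remark on how it will be used --- so there is no hidden argument you are missing. Any repair must weaken the claim: restrict to $j=i+3$; or assert only that \emph{some} optimal OMMD has the stated structure (the tie between $M_1$ and $M_2$ above shows a tie-breaking or canonical-optimum argument would then be required); or add the hypothesis under which the paper actually invokes the corollary, namely that the long edge exists only because the demand of one endpoint exceeds the number of opposite-type points available in the nearer sets. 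Your instinct to test the statement on small interleaved instances before investing further in the induction was the right one.
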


Note that we use this corollary for satisfying the demands of $a \in A_i$ by the points of sets $A_j$ or for satisfying the demands of $b \in A_j$ by the points of sets $A_i$.


\begin{lemma}
\label{lem6}
Let $b<c$ be two points in $S$, and $a<d$ be two points in $T$ such that $a\le b<c\le d$. If an OMMD contains both of $(a,c)$ and $(b,d)$, then $(a,b) \in M$ or $(c,d) \in M$.
\end{lemma}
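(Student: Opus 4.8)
The plan is to argue by contradiction with the same exchange (swap) argument used for Lemma \ref{lem3} and Lemma \ref{lem5}. Assume $M$ is a minimum-cost OMMD containing both $(a,c)$ and $(b,d)$ but containing neither $(a,b)$ nor $(c,d)$. I would then form the modified pair set $M' = \bigl(M \setminus \{(a,c),(b,d)\}\bigr) \cup \{(a,b),(c,d)\}$ and show that $M'$ is again a feasible OMMD of strictly smaller cost, contradicting minimality; this forces at least one of $(a,b)$, $(c,d)$ into $M$.

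First I would check feasibility of $M'$. Since $a,d\in T$ and $b,c\in S$ are four distinct elements, and since by the contradiction hypothesis $(a,b)\notin M$ and $(c,d)\notin M$, the set $M'$ is well defined (no pair gets duplicated) with $|M'|=|M|$. For the degrees: deleting $(a,c)$ and $(b,d)$ lowers $\deg(a),\deg(b),\deg(c),\deg(d)$ each by one, and inserting $(a,b)$ and $(c,d)$ raises each of them back by one, while every other point is untouched. Hence all demands remain exactly satisfied and $M'$ is a valid OMMD.

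Next I would compare costs. Because $a\le b<c\le d$, the two ``crossing'' distances split as $|a-c| = |a-b| + |b-c|$ and $|b-d| = |b-c| + |c-d|$, so
\[
|a-c| + |b-d| \;=\; |a-b| + |c-d| + 2\,|b-c|.
\]
Since $b<c$ strictly, $|b-c|>0$, and therefore $\mathrm{cost}(M') = \mathrm{cost}(M) - 2|b-c| < \mathrm{cost}(M)$, contradicting the minimality of $M$. This is the same elementary one-dimensional identity already exploited in the proof of Lemma \ref{lem1}, so I would likely just cite that proof and note the degree-preserving nature of the swap.

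I do not expect a genuine obstacle here; the only point needing care is the feasibility bookkeeping — confirming that the swap is degree-preserving and, crucially, that we are never forced to insert a pair already present in $M$. That last point is exactly why the conclusion is phrased as an ``or'' (one of $(a,b)$, $(c,d)$ lies in $M$) rather than as the unconditional exclusion of Lemma \ref{lem3}: in the presence of demands, $M$ may be compelled to keep the crossing pair $(a,c)$ together with $(b,d)$, but only when one of the uncrossed pairs is also present, which is what blocks the cost-reducing exchange.
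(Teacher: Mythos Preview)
Your proposal is correct and follows exactly the same exchange argument as the paper: assume for contradiction that neither $(a,b)$ nor $(c,d)$ lies in $M$, swap $\{(a,c),(b,d)\}$ for $\{(a,b),(c,d)\}$, and obtain a cheaper OMMD. Your write-up is in fact more careful than the paper's, which omits the explicit feasibility check and the cost identity $|a-c|+|b-d|=|a-b|+|c-d|+2|b-c|$.
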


\begin{proof}
Suppose that the lemma is false. Let $M$ be an OMMD that contains both $(a,c)$ and $(b,d)$, and neither $(a,b) \in M$ nor $(c,d) \in M$ (Fig. \ref{fig:2}(a)). Then we can remove the pairs $(a,c)$ and $(b,d)$ from $M$ and add the pairs $(a,b)$ and $(c,d)$: the result $M'$ is still an OMMD which has a smaller cost, a contradiction.
\end{proof}




\begin{lemma}
\label{lemma-lem10}
Let $a<a'\leq b<b'$ such that $a,a' \in S$ and $b,b' \in T$. Assume that we must match the points $a,a'$ to the points $b,b'$. Then, in an OMMD it does not matter that we use the pairs $(a,b),(a',b')$ or the pairs $(a,b'),(a',b)$.
\end{lemma}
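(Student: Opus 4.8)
The plan is to prove the equal-cost claim by a direct cost comparison, exploiting the fact that the four points satisfy $a<a'\le b<b'$ and lie on a line, so all pairwise distances are sums and differences of the same four coordinates. First I would write out the two candidate contributions explicitly: the ``parallel'' choice contributes $|a-b|+|a'-b'|=(b-a)+(b'-a')$, while the ``crossing'' choice contributes $|a-b'|+|a'-b|=(b'-a)+(b-a')$, using $a\le b$, $a'\le b'$, $a<b'$, and $a'\le b$ to drop the absolute values. Expanding both expressions, each equals $b+b'-a-a'$, so the two sums are identical. Hence replacing one pair of edges by the other changes neither the multiset of matched endpoints nor the total cost, and in particular feasibility (each point still receives the same number of incident edges) and optimality are preserved.

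The only subtlety is the degenerate configuration $a'=b$, where one must be slightly careful that matching $a'$ to $b$ is a legitimate pair (a zero-cost edge between coincident points of $S$ and $T$); the inequality chain $a<a'=b<b'$ still gives $|a-b'|=b'-a$, $|a'-b|=0=b-a'$, $|a-b|=b-a$, $|a'-b'|=b'-a'$, and the identity $ (b-a)+(b'-a')=(b'-a)+(b-a')$ continues to hold, so the argument goes through unchanged. I would also note the trivial reduction that if the two pairs already coincide (for instance if $a=a'$ or $b=b'$ forced some edges to be equal) there is nothing to prove; the generic case $a<a'$, $b<b'$ is the substantive one and is exactly what the displayed inequalities encode.

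I do not anticipate a genuine obstacle here: this is the OMMD analogue of Lemma~\ref{lem6} restricted to the situation where the set of endpoints to be matched is fixed in advance, so no global rearrangement of $M$ is needed and no contradiction argument is required — it is a one-line arithmetic identity dressed up as an exchange argument. The main thing to get right in the write-up is simply stating clearly that ``it does not matter'' means the two local modifications yield OMMDs of equal cost, and that this is why the later dynamic program is free to choose whichever pairing is algorithmically convenient (this is the use made of the lemma in Cases~3 and~4 and in Lemma~\ref{lemma-new}).

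\begin{proof}
We may assume $a<a'$ and $b<b'$, since otherwise two of the four points coincide and the two pairings coincide as well, so there is nothing to prove. From $a\le b$, $a'\le b'$, $a<b'$ and $a'\le b$ we obtain, with all absolute values resolved by these inequalities,
\[
|a-b|+|a'-b'| = (b-a)+(b'-a') = b+b'-a-a',
\]
\[
|a-b'|+|a'-b| = (b'-a)+(b-a') = b+b'-a-a'.
\]
Thus the pair of edges $(a,b),(a',b')$ and the pair of edges $(a,b'),(a',b)$ have the same total cost. Moreover, in either case the point $a$ is matched once, $a'$ is matched once, $b$ is matched once and $b'$ is matched once, so replacing one pair of edges by the other inside a matching $M$ leaves the degree of every point unchanged; in particular, if $M$ is an OMMD (every demand met) then so is the modified matching, and its cost is unchanged. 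Hence it is immaterial whether an OMMD uses $(a,b),(a',b')$ or $(a,b'),(a',b)$ to match $\{a,a'\}$ to $\{b,b'\}$.
\end{proof}
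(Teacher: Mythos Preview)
Your proof is correct and is exactly the paper's argument: the paper's proof is the one-line identity $(b-a)+(b'-a')=(b'-a)+(b-a')$, and you have reproduced this with some additional (harmless) bookkeeping about degenerate cases and degree preservation.
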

\begin{proof}
The cost of the two pairs $(a,b),(a',b')$ is equal to the cost of the two pairs $(a,b'),(a',b)$. Since we have $$(b-a)+(b'-a')=(b'-a)+(b-a').$$
\end{proof}
\begin{lemma}
\label{lem10}
Let $A=\{a_1,a_2,\dots,a_s\}$ and $B=\{b_1,b_2,\dots,b_t\}$ be two distinct sorted sets of points with demands lying on the real line. Let $D_A=\{\alpha_1,\alpha_2,\dots,\alpha_s\}$ and $D_B=\{\beta_1,\beta_2,\dots,\beta_t\}$ be the demands of the points of $A$ and $B$, respectively, such that $s\ge \max^t_{j=1}{\beta_j}$ and $t\ge \max^s_{i=1}{\alpha_i}$. Then, we can compute an OMMD between $A$ and $B$ in $O(s+t)$ time.
\end{lemma}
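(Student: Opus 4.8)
The plan is to reduce the computation to a single left-to-right scan of the merged sorted list $A\cup B$, after first pinning down the structure of an optimal OMMD. I would begin with two observations. First, every pair of an OMMD between $A$ and $B$ joins a point of $A$ to a point of $B$, so if $g$ ranges over the gaps between consecutive points of $A\cup B$, $|g|$ denotes the length of $g$, and $c_M(g)$ is the number of pairs of an OMMD $M$ that straddle $g$, then $\mathrm{cost}(M)=\sum_g |g|\,c_M(g)$. In the situation in which the lemma is actually applied --- $A$ and $B$ being consecutive blocks of $S\cup T$, so that all of $A$ precedes all of $B$ --- the count $c_M(g)$ is forced: for a gap lying between two points of $A$ it is the sum of the demands of the $A$-points to its left, and for a gap between two points of $B$ it is the sum of the demands of the $B$-points to its right, because every edge incident to those points must cross $g$. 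Hence $c_M(g)$, and therefore $\mathrm{cost}(M)$, depends only on $D_A$ and $D_B$ and not on $M$ at all. Second, by Lemma~\ref{lem6} and Lemma~\ref{lemma-lem10}, any crossing configuration $(a_i,b_j),(a_{i'},b_{j'})\in M$ with $a_i<a_{i'}$ and $b_j>b_{j'}$ can be replaced by $(a_i,b_{j'}),(a_{i'},b_j)$ without increasing the cost; iterating this pushes $M$ into a canonical staircase form in which the neighbourhood of each point on the opposite side is a block of consecutive indices and consecutive blocks only move rightward.

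Given this, the algorithm is a one-pass sweep. Scanning $A\cup B$ from left to right, maintain the running demand-sum of the $A$-points seen so far and of the $B$-points seen so far; at each gap the straddle count is read directly off these two running totals, and the optimal cost is accumulated as $\sum_g |g|\,c(g)$. To also output a witnessing matching, run the staircase construction in parallel: keep a pointer $i$ into $A$ and a pointer $j$ into $B$ with residual demands $r_i$ and $r_j$, emit the pair $(a_i,b_j)$, decrement $r_i$ and $r_j$, and when a residual hits $0$ advance the corresponding pointer, advancing $j$ before $i$ whenever both vanish so that $a_{i+1}$ is never forced to reuse $b_j$. Each elementary step advances a pointer or emits one pair, the number of pointer advances is $O(s+t)$, and the staircase can be emitted in the compressed form ``$a_i$ matched to $b_j,\dots,b_{j+\alpha_i-1}$'', so the whole computation runs in $O(s+t)$ time. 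If $A$ and $B$ happen to interleave the straddle counts are no longer plain prefix sums, but the same staircase sweep still produces the optimum within the same bound.

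The step that needs genuine care --- and the only place the hypotheses $s\ge\max_j\beta_j$ and $t\ge\max_i\alpha_i$ are used --- is verifying that the staircase reduction, and the sweep that realises it, never match a point to the same point twice. The plain uncrossing swap above can coincide with a pair that is already present, so rather than iterate it blindly I would argue by extremality: among all minimum-cost OMMDs choose one minimising $\sum_{(a_i,b_j)\in M} i\,j$, and show that any crossing still present in it admits a cost-non-increasing, simplicity-preserving local exchange, contradicting minimality. That the prescribed consecutive-block assignment really is a simple bipartite graph with degrees $D_A,D_B$ rests on $\sum_i\alpha_i=\sum_j\beta_j$ (which is forced for any OMMD between $A$ and $B$ to exist) together with $s\ge\max_j\beta_j$ and $t\ge\max_i\alpha_i$; checking it is a Gale--Ryser-style feasibility argument that the sorted geometry makes routine. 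I expect this bookkeeping --- not the geometry --- to be the main obstacle.
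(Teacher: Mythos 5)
Your argument rests on the assumption that in an OMMD every point's degree equals its demand, so that $\sum_i\alpha_i=\sum_j\beta_j$ and the straddle count of every gap is a fixed prefix (or suffix) sum of demands. That assumption is false in this paper's model: a demand is a lower bound on the degree, and a point is in general matched to more partners than its demand in order to satisfy the demands of the other side. This is visible throughout the paper --- Case~0 of the reviewed algorithm routes the $i-s$ surplus points of $A_1$ all to $a_s$, the proof of Lemma~\ref{lem10_2} explicitly manipulates points $b'$ with $deg(b')>demand(b')$, and the paper's own proof of the present lemma says that once all of $B$ is satisfied a still-deficient $a_i$ is matched to the closest points of $B$ not already matched to it. Consequently your first observation fails: the cost is \emph{not} independent of $M$, and the quantity your sweep accumulates is an unachievable lower bound. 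Concretely, take $A=\{a_1,a_2\}$, $B=\{b_1,b_2\}$ with $\alpha_1=\alpha_2=2$ and $\beta_1=\beta_2=1$; the hypotheses $s\ge\max_j\beta_j$ and $t\ge\max_i\alpha_i$ hold, the unique feasible OMMD is the complete bipartite graph, and the gap between $b_1$ and $b_2$ is straddled by $2$ pairs, whereas your formula reads off $\beta_2=1$ there. For the same reason your pointer sweep can halt with $a_2$'s demand unsatisfied once both residuals vanish and $j$ has already reached $t$.

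The substantive content of the lemma, which the proposal does not engage with, is exactly the placement of this surplus degree: when $\sum_i\alpha_i\ne\sum_j\beta_j$ the optimum must choose which points absorb the extra incidences, the cost genuinely depends on that choice (excess wants to sit at the interface between $A$ and $B$, but, as the example shows, the no-repeated-pair constraint can push it elsewhere), and the running-time claim has to be argued for an algorithm that makes that choice. The paper's proof is a two-phase greedy --- first satisfy demands in order, then route each still-unsatisfied point to the closest points not yet matched to it --- justified by the exchange identity of Lemma~\ref{lemma-lem10}. Your uncrossing/staircase canonicalisation would be a reasonable way to normalise an optimum \emph{after} the degree sequence is fixed, but without an argument determining the optimal degree sequence the proposal does not establish the lemma.
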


\begin{proof}
Obviously, for reaching the minimum cost we must match each point in $B$ with the first unsatisfied point in $A$ (if exists). Notice by Lemma \ref{lemma-lem10}, the order of matching is arbitrary. Without loss of generality we assume $a_s \le b_1$ and match two sets as follows.

First, we match each point $a_i \in A$ with the first unsatisfied point in $B$. If all points in $B$ are satisfied, then we match $a_i \in A$ with the closest point of $B$ that is not matched with it. Then, if there exists any unsatisfied point in $B$, starting from $b_{1}$, we match each point of $B$ with the closest point of $A$ not matched with it previously.

\end{proof}

\begin{theorem}
Let $S$ and $T$ be two sets of points on the real line with $\left|S\right|+\left|T\right|=n.$ Then, an OMMD between $S$ and $T$ can be determined in $O(n^2)$ time.
\end{theorem}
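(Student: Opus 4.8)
The plan is to turn the case analysis and the lemmas already developed into a single left-to-right dynamic program over the points of $S\cup T$. Following Colannino et al. (and Rajabi-Alni--Bagheri for the demand bookkeeping), I would maintain for every point $q\in S\cup T$ a value $C(q)$: the minimum cost of an OMMD on $\{p\in S\cup T:\ p\le q\}$ under the convention that the points matched leftward (those before the current separator) are fully satisfied, while the point(s) near and after the separator may still carry a residual demand to be met by points further to the right. By Corollary~\ref{consecutive} together with Lemma~\ref{lem5} and Corollary~\ref{corlem5}, every pair in an optimal OMMD joins two points in consecutive blocks $A_w,A_{w+1}$ of the alternating partition (a demand that ``jumps'' across a block is excluded by Corollary~\ref{corlem5}), so an optimal solution is determined interface by interface. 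By the separating-point structure (Lemma~\ref{lem4}) together with Lemma~\ref{lem6}, at each interface $A_w\mid A_{w+1}$ the points of $A_w$ matched leftward form a prefix and those matched rightward form a suffix, overlapping only near the separator, and by Lemma~\ref{lemma-lem10} the internal pairing inside the overlap does not affect the cost. These are exactly the facts that make a sweep with one scalar table entry per point sound.

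Concretely, I would first fix the base case, setting $C(p)=\infty$ for $p\in A_0$ and using the closed forms of Case~0 for the interface $A_0\mid A_1$. Then, sweeping through the blocks, for the current block $A_{w+1}=\{b_1,\dots,b_t\}$ with left neighbour $A_w=\{a_1,\dots,a_s\}$ I would compute $C(b_1),\dots,C(b_t)$ in order: $C(b_1)$ from the Case~1--3 formulas, or in Case~4 from $\min\{X(b_1),Y(b_1),Z(b_1)\}$ with $X,Y,Z$ the three ``at least $i{+}1$ / exactly $i$ / fewer than $i$ points of $A_w$'' quantities defined there; and for $i\ge2$ the recurrence of Lemma~\ref{lemma-new}, which gives $C(b_i)$ from $C(b_{i-1})$, the edge lengths $e_\bullet,f_\bullet$, and already-computed values $C(a_\bullet)$, branching on the residual demands $deg(b_1)$ and $deg(a_s)$. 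Whenever both blocks are large enough to absorb each other's demands, Lemma~\ref{lem10} settles the entire interface directly in $O(s+t)$. Correctness then follows by induction on the points in increasing coordinate order: the structural lemmas force an optimal OMMD restricted to $\{p\le b_i\}$ to have precisely the prefix/suffix-at-a-separator shape that the recurrences enumerate, each recurrence takes the minimum over the finitely many admissible separator positions and demand-routing choices, and the monotonicity lemma inside the proof of Lemma~\ref{lemma-new} (once some $b_i$ picks $a_s$ as its separator, so do all later $b_j$) keeps the $X(b_i)$ values consistent as $i$ grows. The answer is $C(m)$ for $m$ the largest point of $S\cup T$.

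For the running time I would argue as follows. Sorting $S\cup T$ and building the alternating partition costs $O(n\log n)$, or $O(n)$ if the input is presorted. At an interface $A_w\mid A_{w+1}$ with $s_w=|A_w|$, $t_w=|A_{w+1}|$, evaluating the prefix-sum-style quantities $X,Y,Z$ and doing the residual-demand updates costs $O(s_w t_w)$ in the worst case (each $X(b_i)$ is a minimum over at most $s_w$ candidate separators) and $O(s_w+t_w)$ in the easy case covered by Lemma~\ref{lem10}. Since each point lies in exactly one block, $\sum_w s_w\le n$ and $\sum_w t_w\le n$, so $\sum_w s_w t_w\le\bigl(\sum_w s_w\bigr)\bigl(\sum_w t_w\bigr)=O(n^2)$, which dominates the sorting term; hence the algorithm runs in $O(n^2)$ time.

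I expect the main obstacle to be the correctness bookkeeping for the demands rather than the operation count. In the plain MM problem each point only needs to be covered once, so $C(q)$ is unambiguous; with demands a point may be matched several times and may still owe demand to the right of $q$, so one must pin down exactly what $C(q)$ records and verify that $X(b_i),Y(b_i),Z(b_i)$ and the $deg(\cdot)$ updates of Lemma~\ref{lemma-new} genuinely range over every way an optimal OMMD can route demand across an interface --- in particular that it is never advantageous to satisfy a demand by skipping a block (this is where Corollary~\ref{corlem5} is essential) and that the prefix/suffix separating-point structure survives when demands exceed one (this is where Lemma~\ref{lem4} and Lemma~\ref{lem6} are used). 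Once these structural guarantees are in place, the theorem is just the assembly of the recurrences together with the summation above.
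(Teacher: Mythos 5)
There is a genuine gap, and it sits exactly where you flagged the risk. Your sweep is built on the claim that every pair in an optimal OMMD joins two \emph{consecutive} blocks $A_w, A_{w+1}$, with ``a demand that jumps across a block excluded by Corollary~\ref{corlem5}.'' Corollary~\ref{corlem5} says the opposite: it does not forbid a pair $(a,d)$ with $a\in A_i$, $d\in A_j$, $j>i+1$, it only forces such a pair to come with the jumping point being matched to \emph{all} points of every second intervening block. With demands such jumps are unavoidable: if some $b_i\in A_{w+1}$ has $\beta_i>|A_w|$, it simply cannot satisfy its demand inside $A_w$ and must reach into $A_{w-2},A_{w-4},\dots$. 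The paper devotes Case~B of its main step and the entire final step to this situation, and needs Lemma~\ref{lem8}'s recurrence $C(a_i,k)=\min_{j}\bigl(C(a_{i-1},j)+Cost(a_i,k-j)\bigr)$ precisely to decide how many of a point's $k$ demands are routed leftward versus rightward. A single scalar $C(q)$ per point cannot record that split; the paper's table is two-dimensional, $C(q,j)$ for $1\le j\le Demand(q)$, and the algorithm makes $m'$ passes over demand levels $k=1,\dots,m'$, re-partitioning each block into sub-blocks $A_{w1},\dots,A_{wk}$ (Lemma~\ref{lem4_22}) on each pass. Relatedly, your appeal to Lemma~\ref{lem10} is only valid under its hypothesis $s\ge\max_j\beta_j$ and $t\ge\max_i\alpha_i$; outside that easy case your proposal has no recurrence at all, and your $O(n^2)$ accounting covers only the consecutive-interface work, not the backward/forward searches across blocks or the factor coming from iterating over demand levels.

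To your credit, the part of your plan that addresses consecutive blocks does match the skeleton of the paper's Case~A: the separating-point structure (Lemmas~\ref{lem4} and~\ref{lem4_2}), the $X/Y/Z$ prefix-minima, the Lemma~\ref{lemma-new}-style update from $C(b_{i-1})$ to $C(b_i)$, and the bound $\sum_w s_w t_w\le\bigl(\sum_w s_w\bigr)\bigl(\sum_w t_w\bigr)=O(n^2)$ are all present in the paper. But the theorem is about OMMD, not MM, and the content that distinguishes it --- the second index on $C$, the per-demand-level passes, and the non-adjacent-block routing of Case~B and the final step --- is exactly what your proposal omits by assuming it away. As written, the argument proves (a variant of) Colannino et al.'s MM result rather than the stated theorem.
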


Let $Demand(q)$ denote the demand of the point $q$, i.e., the number of the points that must be matched to $q$. For any point $q$, let $C(q,j)$ be the cost of an OMMD for the set of points $\{p\in S\cup T$, with $p\leq q$ and $1 \leq j \leq Demand\left(q\right)\}$. Initially $C\left(q,j\right)=\infty $ for all $q\in S\cup T$ and $1\le j\le Dem(q)$. If $m$ and $m'$ are the largest point and largest demand in $S\cup T$, respectively then $C(m,m')$ is the cost of an OMMD.

Our idea is that in an OMMD there exists an intersection, since the points are previously used for satisfying the demands of the intersecting points, and they are unusable for the previously matched points. In fact, the $kth$ best point of a point is the $k'th$ best point of another point, and so we have an intersection. Another reason of an intersection is that a point for satisfying its demand is matched to points that are far and so, other points can satisfy their demands from closer points as stated in Observation \ref{observation1}, Observation \ref{observation2}, and observation \ref{observation3}. Consider the first demands of the points of $A_{w+1}$. Some points of $A_w$ are matched with $b_1$, and some not, so we have two subsets in $A_w$: $A_{w11}=a_1 \dots a_j$ that is not matched with the points in $A_{w+1}$, and $A_{w21}=a_{j+1} \dots a_s$ that is matched to the points in $A_{w+1}$. Then, consider the second demands of the points of $A_{w+1}$. The points of $A_{w+1}$ that effect the first partition, $A_{w11}$, are different from the points in $A_{w+1}$ that effect $A_{w21}$. So, we must find the optimal points of $A_{w11}$ and $A_{w21}$ separately. Consider the $k$th demands of the points of $A_{w+1}$, we have $k+1$ partitions $A_{w11},A_{w2},\dots,A_{w(k+1)1}$ after inserting the point $b_1$. We proceed until satisfying the demands of all points in $A_{w+1}$ and finding an OMMD for all points $p \le b_i$. Obviously, in this algorithm there exists two steps: main and final.

\begin{lemma}
\label{lem4_22}
Each point $b_i \in A_{w+1}$ partitions $A_w$ to $k+1$ partitions, $A_{wii}, \dots, A_{w(k+1)i}$, where $k$ is the largest demand of the points of $A_w$ and $m'$ is the largest demand of the points of $S \cup T$. Each sub partition $A_{wji}$ contains a point $q(w,j,i)$ for $i\leq j \le k$, such that all points $a \in A_{wji}$ with $a<q(w,j,i)$ and $demand(a) \ge j$ are matched with the points in $A_{w-1}$ and all points $a' \in A_{wji}$ with $q(w,j,i)<a'$ and $demand(a') \ge j$ are matched with the points in $A_{w+1}$.
\end{lemma}

\begin{proof}
Given $b\le a <a'\le b'$, when there exists an intersection, that is the pairs $(b,a')$ and $(a',b)$ are in $M$, we have $(b,a)\in M$ or $(a',b') \in M$ or both. Consider the first demands of the points in $A_w$. Since the points of $A_w$ are not matched with any point previously, the points that can be matched with them are the same. For the second demands the points of $A_w$, there exists two different sub partition in $A_w$, the partition that is matched to the points in $A_{w+1}$ for their first demand and the sub partition that is matched to $A_{w-1}$ for their first demand. So, theses two sub partitions can be matched to different points for their second demand and so their separating points should be computed independently. We can prove it for the $k$th demands of points as the same. So, we can find the separating points of each subset independent from other subsets.

\end{proof}

By Corollary \ref{corlem5}, for each matching $(a,d)$ with $a \in A_i$ and $d \in A_j$ in an OMMD we have $j=i+1$ except two cases: (i) $d \in A_j$ is matched with all points of $A_{j-1}, A_{j-3}, \dots, A_{i+2}$, and (ii) $a \in A_i$ is matched with all points of $A_{i+1}, A_{i+3}, \dots, A_{j-2}$. In the first case, considered in Case B of the main step of our algorithm, we seek the partitions $A_{j-1},A_{j-3}, A_{j-5}, \dots$ to satisfy the demands of $d$. In the second case, explained in the final step of our algorithm, we seek the partitions $A_{i+1},A_{i+3},A_{i+5},\dots$ to satisfy the demands of the point $a$. In the following, we explain the main step our algorithm.
\newline
\newline

\newtheorem{observation}{Observation}
\begin{observation}
\label{observation1}
Let $b\le a <a'\le b'$, in a minimum cost OMMD $M$, if ${(b,a),(a',b),(a,b')}\subseteq M$, then: (i) $(b,a)$ and $(b',a)$ are used for satisfying the demands of $a$, (ii) either $b$ is closer to $a'$ or $a'$ satisfies the demand of $b$ (Figure \ref{fig:observ1}).
\end{observation}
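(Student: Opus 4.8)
The statement is a structural claim about a single \emph{crossing} configuration $\{(b,a),(a',b),(a,b')\}$ inside a minimum-cost OMMD $M$, and the natural instrument is the same local-exchange (uncrossing) argument that proves Lemmas \ref{lem5} and \ref{lem6}, now adapted to the demand setting where each point's degree is pinned to its demand and no point-pair may be used twice. The plan is to record first the two distance facts forced by $b\le a<a'\le b'$: since $a<a'$ we have $a'-b>a-b$, so $a'$ is \emph{farther} from $b$ than $a$ is, whereas since $a'<b'$ we have $b'-a'<b'-a$, so $a'$ is \emph{closer} to $b'$ than $a$ is. Hence the two ``long'' edges of the configuration are $(a,b')$ at $a$ and $(a',b)$ at $b$, and the whole observation is about why these long edges survive in an optimal matching. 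I would also note that the configuration is exactly the crossing permitted by Lemma \ref{lem6}: the pairs $(a',b)$ and $(a,b')$ cross, so Lemma \ref{lem6} gives $(b,a)\in M$ or $(a',b')\in M$; since $(b,a)\in M$ is among the hypotheses, this is precisely the \emph{stable} case in which the short edge $(a,b)$ is present and the crossing cannot simply be undone.

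For part (i) I would argue by contradiction. Suppose one of the edges incident to $a$ is \emph{not} needed to meet $a$'s demand, i.e.\ $a$ already reaches its demand through its other matches; the relevant candidate is the long edge $(a,b')$. The idea is then to delete $(a,b')$ and re-route $b'$ to the nearer partner $a'$: because $b'-a'<b'-a$ this strictly lowers the cost contributed at $b'$. The point of the argument is that this re-routing completes to a feasible OMMD — deleting $(a,b')$ is harmless for $a$ by assumption, $b'$'s lost unit of demand is restored by the cheaper edge $(a',b')$, and $a'$'s degree is rebalanced along an alternating chain anchored at the short edge supplied by Lemma \ref{lem6} — so the resulting matching is strictly cheaper than $M$, a contradiction. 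Hence $(a,b')$, and by the symmetric deletion the short edge $(a,b)$ as well, must be pinned by the demand of $a$ itself, which is the content of (i).

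Part (ii) is the corresponding analysis at $b$, where the conclusion is weaker and appears as a disjunction. Here the long edge is $(a',b)$, and I would repeat the exchange: if $(a',b)$ were not needed to meet $b$'s demand, I would delete it and re-route $a'$ to a closer partner, obtaining a cheaper feasible OMMD unless such a re-routing is blocked. The two ways it can be blocked are exactly the two alternatives in (ii): either the edge $(a',b)$ is in fact forced by $b$'s demand, so ``$a'$ satisfies the demand of $b$'', or $b$ is already $a'$'s nearest admissible partner on that side, so ``$b$ is closer to $a'$'' and no closer re-routing exists. Translating ``not blocked'' into a strict cost decrease, and ruling out every other possibility, yields the disjunction.

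The main obstacle is the bookkeeping forced by the two rigid features of the model: demands are met exactly and no pair may be matched twice. The naive uncrossing of $\{(a',b),(a,b')\}$ into $\{(a,b),(a',b')\}$ would save $2(a'-a)$, since
\[
\bigl[(a'-b)+(b'-a)\bigr]-\bigl[(a-b)+(b'-a')\bigr]=2(a'-a)>0,
\]
but it is infeasible precisely because $(a,b)\in M$ already; this is why the crossing is stable and why the observation is not vacuous. The real work is therefore to show that whenever a long edge is \emph{not} demand-essential, the compensating changes can be pushed along an alternating chain that avoids duplicating an existing pair and keeps every degree equal to its demand, so that a genuinely cheaper OMMD is produced. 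Making the informal phrases ``used for satisfying the demand of $a$'' and ``$a'$ satisfies the demand of $b$'' precise — as the statement that a designated unit of a point's demand is chargeable to a specific incident edge — and keeping that charge consistent through the exchange, is where the argument will have to be most careful.
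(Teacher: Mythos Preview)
The paper does not prove this statement: Observation~\ref{observation1} is asserted with only a reference to Figure~\ref{fig:observ1} and no accompanying argument, and the same is true of Observations~\ref{observation2} and~\ref{observation3}. There is therefore no ``paper's proof'' to compare against; your exchange-based outline already goes beyond what the authors supply.

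On the substance of your attempt, two remarks. First, you model the OMMD as having degrees \emph{pinned exactly} to demands, but the paper treats demands as lower bounds: see for instance the phrase ``$deg(b')>k\ge demand(b')$'' in the proof of Lemma~\ref{lem10_2}, and the general logic of the algorithm, where points routinely acquire extra matches to service the demands of others. This changes the bookkeeping of your exchanges --- raising a degree is always feasible, and only decreases need to be checked against the demand --- and in fact makes part~(i) easier, since adding $(a',b')$ costs nothing in feasibility for $a'$. Second, your re-routing for~(i) deletes $(a,b')$ and adds $(a',b')$, but this step fails silently when $(a',b')$ is already in $M$, which is exactly the configuration of Observation~\ref{observation3}; your appeal to an ``alternating chain anchored at the short edge supplied by Lemma~\ref{lem6}'' does not say what that chain is, why it avoids duplicating an existing pair, or why it terminates. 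Given how loosely the paper itself leaves the phrase ``used for satisfying the demands of $a$'', your reading and overall strategy are reasonable, but the $(a',b')\in M$ case is a genuine gap that would need to be closed for a complete argument.
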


\begin{observation}
\label{observation2}
Let $b\le a <a'\le b'$, in a minimum cost OMMD $M$, if ${(a',b'),(a',b),(a,b')}\subseteq M$, then: (i) $(b,a')$ and $(b',a')$ are used for satisfying the demands of $a'$, (ii) either $b'$ is closer to $a$ or $a$ satisfies the demand of $b'$ (Figure \ref{fig:observ2}).
\end{observation}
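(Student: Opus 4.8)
The plan is to read Observation~\ref{observation2} as a \emph{local exchange} statement about the crossing formed by the edges $(a',b)$ and $(a,b')$, dual to Observation~\ref{observation1}. Recall the geometry: $b\le a<a'\le b'$ with $a,a'\in S$ and $b,b'\in T$, so reading left to right we meet $b$, then $a$, then $a'$, then $b'$, and the two edges $(a',b)$ and $(a,b')$ cross. The candidate \emph{uncrossing} replaces this crossing by the parallel pair $(a,b)$ and $(a',b')$, which by Lemma~\ref{lemma-lem10} has exactly the same total length. The entire content of the observation is that, under its hypothesis, this cost-neutral uncrossing is \emph{obstructed}, because $(a',b')$ is already present in $M$.

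For part (i) I would first invoke Lemma~\ref{lem6} on the crossing $(a',b),(a,b')$: with the identification of $(a,a')$ as the two $S$-points and $(b,b')$ as the two $T$-points, Lemma~\ref{lem6} forces $(a,b)\in M$ or $(a',b')\in M$ in any minimum-cost OMMD. The hypothesis $(a',b')\in M$ selects precisely the second branch (the first branch $(a,b)\in M$ being the setting of Observation~\ref{observation1}), so the two observations exhaust the two alternatives of Lemma~\ref{lem6}. It then follows that $a'$ is matched both to $b$ (to its left) and to $b'$ (to its right), and I would argue these two edges are \emph{forced} onto $a'$: if either of $(a',b)$, $(a',b')$ could be deleted and re-routed to a nearer partner while preserving the demand of its $T$-endpoint, the resulting $M'$ would be a feasible OMMD of strictly smaller cost, contradicting optimality. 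The crucial subtlety is that the cost-preserving uncrossing of Lemma~\ref{lemma-lem10} is \emph{unavailable} here: applying it would add a second copy of $(a',b')$, and a matching is a set, not a multiset, of pairs. This is exactly why the crossing can survive in an optimum, and why both $(a',b)$ and $(a',b')$ are correctly charged to the demand of $a'$.

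For part (ii) I would analyze the surviving edge $(a,b')$, in which the $S$-point $a$ reaches across $a'$ to the far $T$-point $b'$, and establish the dichotomy by a second exchange. Reading ``$b'$ is closer to $a$'' as ``$b'$ is the closest \emph{available} $T$-point to $a$'' and ``$a$ satisfies the demand of $b'$'' as ``the edge $(a,b')$ is needed to meet $b'$'s demand,'' I would suppose for contradiction that both fail. Then $b'$ can relinquish $(a,b')$ and be re-matched to another available $S$-point, while $a$ has a strictly nearer free $T$-point $b''$ to take instead; replacing $(a,b')$ by $(a,b'')$ and re-routing the freed demand of $b'$ yields a feasible OMMD of strictly smaller cost, contradicting the minimality of $M$. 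Hence at least one alternative in (ii) must hold.

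The main obstacle is the feasibility bookkeeping in these re-routings, not the distance inequalities (which go exactly as in Lemma~\ref{lem6} and Lemma~\ref{lemma-lem10}). I must verify that each constructed $M'$ still meets every point's demand while its cost strictly drops; in particular, I must confirm that the ``set, not multiset'' restriction is genuinely the only thing blocking the cost-neutral uncrossing in part (i), and that whenever the dichotomy in part (ii) fails there really exists a nearer free partner $b''$ to absorb the swap. Pinning down these demand-preservation details is where the real work of the proof lies.
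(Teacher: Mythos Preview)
The paper does not actually prove Observation~\ref{observation2}: it is stated as an ``observation'' with only a reference to Figure~\ref{fig:observ2} and no accompanying argument. The same is true of the companion Observations~\ref{observation1} and~\ref{observation3}. So there is no proof in the paper to compare against; your proposal is already more than the authors provide.

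That said, your exchange-based reading is the natural one and almost certainly what the authors have in mind. A couple of points are worth tightening. First, your use of Lemma~\ref{lem6} in part~(i) is decorative rather than load-bearing: the hypothesis already hands you $(a',b')\in M$, so Lemma~\ref{lem6} is automatically satisfied and yields nothing new; in particular it does not ``select'' the second branch, since $(a,b)\in M$ is not excluded (that would land you in Observation~\ref{observation3}). The real engine in part~(i) is exactly your ``set, not multiset'' remark: the cost-neutral uncrossing of $(a',b)$ and $(a,b')$ into $(a,b)$ and $(a',b')$ degenerates, because $(a',b')$ is already present, into the strict-cost-saving move ``delete $(a',b)$ and $(a,b')$, add $(a,b)$,'' which drops the degrees of $a'$ and of $b'$ by one each. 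Optimality therefore forces at least one of $a'$, $b'$ to be tight. To get the paper's stronger phrasing that \emph{both} $(b,a')$ and $(b',a')$ serve $a'$'s demand, you still need a second swap (or the assumption $(a,b)\notin M$) to rule out the case where only $b'$ is tight; as written your argument does not quite close that gap, and you are right to flag the feasibility bookkeeping as the place where the work lies. Part~(ii) is handled correctly by your re-routing dichotomy.
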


\begin{observation}
\label{observation3}
Let $b\le a <a'\le b'$, in a minimum cost OMMD $M$, if ${(b,a),(a',b),(a,b'),(a',b')}\subseteq M$, then: (i) these pairs are used for satisfying the demands of at least one of the sets $\{a,b\}$, $\{a',b'\}$, $\{a,a'\}$ or $\{b,b'\}$ (Figure \ref{observ3}).
\end{observation}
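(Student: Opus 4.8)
The plan is to obtain the four-way conclusion by \emph{decomposing} the complete configuration into the two three-edge configurations already analysed in Observation~\ref{observation1} and Observation~\ref{observation2}, and then combining their dichotomies. First I would fix the bipartite structure: since $a$ and $a'$ share the neighbour $b$ they lie on the same side of the matching, and likewise $b,b'$ share $a$; hence $\{a,a'\}$ and $\{b,b'\}$ are the two sides of the bipartition, say $a,a'\in S$ and $b,b'\in T$ (the opposite assignment is symmetric). Thus the four pairs form the complete bipartite configuration $K_{2,2}$ on $\{a,a'\}$ and $\{b,b'\}$, with the two \emph{crossing} edges $(a',b)$ and $(a,b')$ and the two \emph{non-crossing} edges $(a,b)$ and $(a',b')$.

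Next I would record the cost identity that drives every exchange: since $b\le a<a'\le b'$,
\[
\bigl(|b'-a|+|a'-b|\bigr)-\bigl(|a-b|+|b'-a'|\bigr)=2\,(a'-a)>0,
\]
so the two crossing edges are strictly costlier than the two non-crossing ones. As in Lemma~\ref{lem6}, the only reason a minimum-cost $M$ can keep both crossing edges is that the demands block the uncrossing: one cannot delete $(a',b),(a,b')$ and compensate by re-adding $(a,b),(a',b')$, because those pairs already lie in $M$ (a pair is not matched twice). The task is therefore to name \emph{which} point-demands forbid the reroute, and the surplus $2(a'-a)$ is the budget that any competing reroute must beat.

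I would then apply the earlier observations to the two embedded triples. The triple $\{(b,a),(a',b),(a,b')\}\subseteq M$ meets the hypotheses of Observation~\ref{observation1}, whose part~(ii) yields the dichotomy (A)~``$b$ is closer to $a'$'' or (B)~``$a'$ satisfies the demand of $b$''; symmetrically $\{(a',b'),(a',b),(a,b')\}\subseteq M$ meets Observation~\ref{observation2}, whose part~(ii) yields (C)~``$b'$ is closer to $a$'' or (D)~``$a$ satisfies the demand of $b'$''. Each of (A)/(B) explains the crossing edge $(a',b)$ by naming the endpoint ($a'$, resp.\ $b$) whose demand makes it unremovable, and each of (C)/(D) does the same for $(a,b')$ (endpoint $a$, resp.\ $b'$). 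Splitting into the $2\times2$ cases and reading off the two charged endpoints gives exactly the four listed sets: (A)\&(C) charges $a',a$, giving $\{a,a'\}$; (B)\&(D) charges $b,b'$, giving $\{b,b'\}$; (A)\&(D) charges $a',b'$, giving $\{a',b'\}$; and (B)\&(C) charges $b,a$, giving $\{a,b\}$. In every branch both crossing edges are attributed to the demands of a single listed two-point set, which is conclusion~(i).

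The step I expect to be the main obstacle is making ``used for satisfying the demands'' rigorous and closing the \emph{cost-neutral} branches. When the exchange of Lemma~\ref{lem6} is only weakly improving, deleting a crossing edge and re-satisfying the freed demand-unit at a nearer point might leave the cost unchanged, so I must show that the dichotomies (A)--(D) already encode the precise obstruction: ``$b$ is closer to $a'$'' forbids a cheaper substitute for $a'$, while ``$a'$ satisfies the demand of $b$'' means $b$ has no slack, and in either reading the named endpoint genuinely needs its crossing edge. The crux is to verify that the $2\times2$ split is exhaustive and that no reroute outside $\{a,a',b,b'\}$ undercuts the configuration; this is where I would spend the care, using the strict surplus $2(a'-a)>0$ to absorb any residual slack and to rule out a fifth, unaccounted escape.
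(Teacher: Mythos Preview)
The paper does not actually prove Observation~\ref{observation3}: it is stated as a bare observation with only a reference to Figure~\ref{observ3}, and the text moves on immediately to Lemma~\ref{lem4_2}. So there is no argument in the paper to compare your proposal against.

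That said, your route is the natural one and is considerably more structured than anything the paper supplies. Reducing the $K_{2,2}$ configuration to the two triples of Observations~\ref{observation1} and~\ref{observation2} and reading off the $2\times2$ case split is exactly the intended decomposition, and your cost identity $|b'-a|+|a'-b|-|a-b|-|b'-a'|=2(a'-a)>0$ is the right engine. One caution: the paper's phrase ``these pairs are used for satisfying the demands of at least one of the sets'' is loose, and your charging scheme accounts only for the two \emph{crossing} edges $(a',b)$ and $(a,b')$, not for all four. You should make explicit that the non-crossing edges $(a,b)$ and $(a',b')$ are automatically attributable to whichever two-point set your case split selects (each such edge has both endpoints in $\{a,a',b,b'\}$, and at least one of them lies in the charged set in every branch). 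Also, the ``closer to'' clauses in Observations~\ref{observation1} and~\ref{observation2} are themselves unproved in the paper, so if you intend a self-contained argument you will need to justify those dichotomies directly from the uncrossing inequality rather than cite them as given.
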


\begin{figure}
\vspace{-3cm}
\hspace{-6cm}
\resizebox{2\textwidth}{!}{%
  \includegraphics{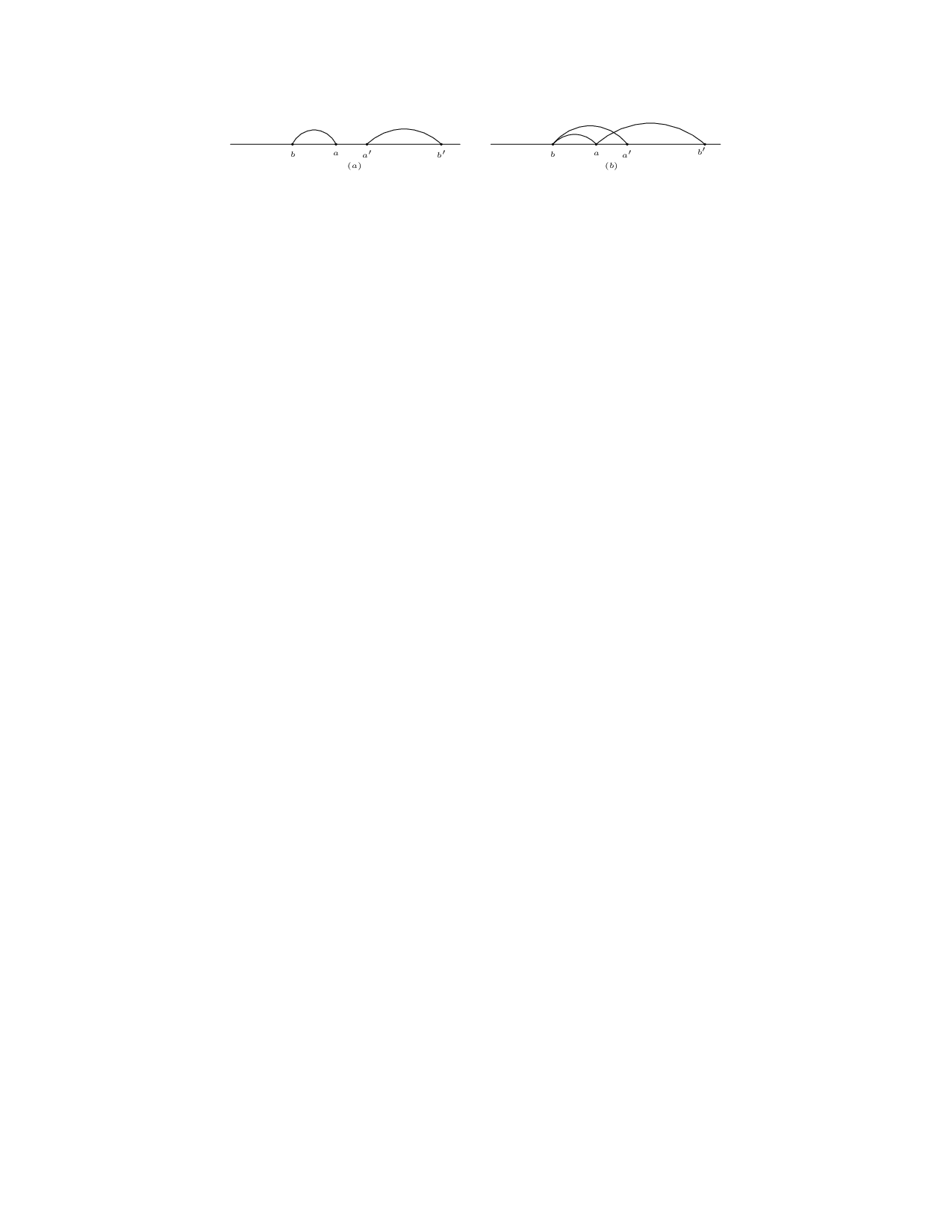}
}
\vspace{-27cm}
\caption{Illustration of Observation \ref{observation1}.}
\label{fig:observ1}       
\end{figure}

\begin{figure}
\vspace{-3cm}
\hspace{-6cm}
\resizebox{2\textwidth}{!}{%
  \includegraphics{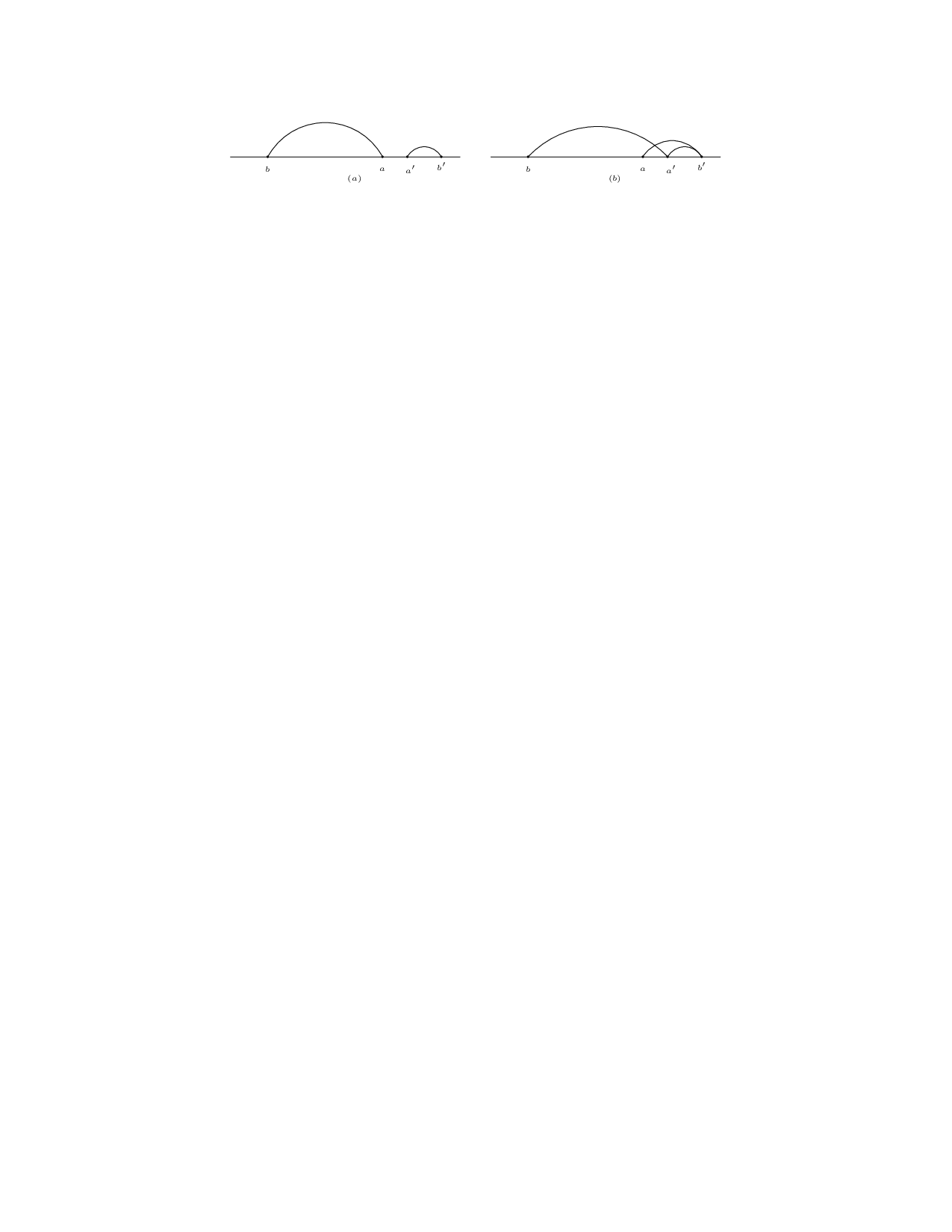}
}
\vspace{-27cm}
\caption{Illustration of Observation \ref{observation2}.}
\label{fig:observ2}       
\end{figure}

\begin{figure}
\vspace{-3cm}
\hspace{-6cm}
\resizebox{2\textwidth}{!}{%
  \includegraphics{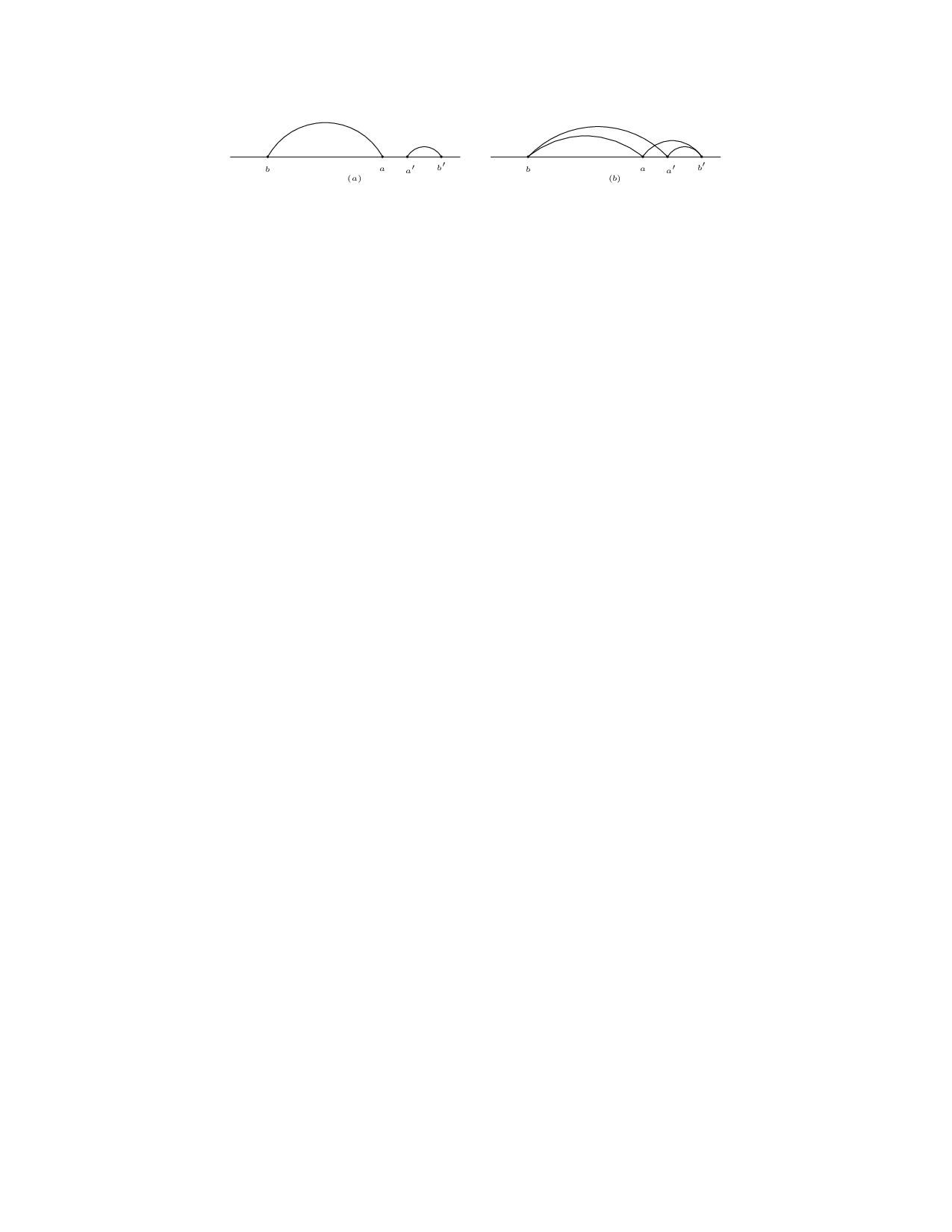}
}
\vspace{-27cm}
\caption{Illustration of Observation \ref{observation3}.}
\label{observ3}       
\end{figure}

\textbf{Main step.} Our idea is that the points $a \in T \cup S$ are examined one by one. If a point decreases the cost of matching it is selected by other points, automatically. So, we first examine that if that point is selected by other points. Then if the demand of that point is satisfied we are done, otherwise we must satisfy the demands of it such that the cost is minimized. So, we first examine that if the point is selected by other points for their first demand, then we examine that if the point is selected by other points for their second demand, and so on.

Consider $A_{w-1}=\{b'_1,b'_2,\dots,b'_r\}$, $A_w=\{a_1,a_2,\dots,a_s\}$ and $A_{w+1}=\{b_1,b_2,\dots,b_t\}$. Let $D_w=\{{\alpha}_1,{\alpha}_2,\dots,{\alpha}_s\}$ and $D_{w-1}=\{{\beta}_1,{\beta}_2,\dots,{\beta}_t\}$ be the demand sets of the points in $A_w$ and $A_{w-1}$, respectively. Assume that we have computed the separating points $q(w-1,j,i)$ for the set $A_{w-1}$ using two partitions $A_{w-2}$ and $A_{w}$ such that the cost is minimized without considering the demands of $A_w$, and now we want to compute $q(w,j,k)$ for $A_w$ using the points in $A_{w-1}$ and $A_{w+1}$.

In this step, the main step, we have two general cases: Case A, and Case B. In the first case, Case A, we assume that the demands of $A_w$ can be satisfied by the points in $A_{w-1}$ or $A_{w+1}$, so we have $\max(r,t)\geq \max^s_{j=1}{{\alpha}_j}$. So for each matching $(a,d)$ we have $a \in A_i$ and $d \in A_{i+1}$. But in the second case, Case B, we have $\max^s_{j=1}{{\alpha}_j}< \max(r,t)$. Therefore, by Corollary \ref{corlem5}, we must investigate the partitions $A_{w-2}, A_{w-4},A_{w-6}, \dots$ to find the points for satisfying the $k+1$th demands of the points.

\begin{description}
\item[Case A:] $\max^s_{j=1}{{\alpha}_j}\leq  \max (r,t)$. That is, all demands of $A_w$ can be satisfied by $A_{w-1}$ or $A_{w+1}$.
\begin{description}

\item[Case A.0:] $w=0$. In this case, $A_0$ only can be matched to $A_1$. So, we get the minimum matching by satisfying the demands of each point in $A_0$ to the smallest unmatched point in $A_1$ without considering the demands of $A_1$.

\item[Case A.1:] $w>0$. In this case, we find the separating points of $A_w$ as follows. Consider the first point of $A_{w+1}$, that is $b_1$, it might decrease the cost of satisfying the first demands of the points of $A_w$. So, we find the separating point for the first demand of the points $a_1, \dots, a_s$. Let $a_h$ be the separating point. That is by matching the points $a_h, \dots,a_s$ to $b_1$ the cost of OMMD decreases. Then, we consider the points $a_1, \dots, a_{h-1}$, and examine that if $b_1$ is selected by them for their second demands. We repeat this process for other demands and examine that if $b_1$ can decrease the cost of matching or not. Then, we insert $b_2$. Let $a_i, \dots, a_j$ be the points that are matched to $b_1$ for their $k$th demands. We examine that if $b_2$ can decrease the cost of matching by satisfying their $k+1$th demands or not. We examine all the points $b_1, \dots, b_t$ in $A_{w+1}$ to find the separating points $q(w,j,i)$ of $A_w$ for $1 \leq k\leq m'$ and $i \leq j\leq k$.

    \begin{lemma}
\label{lemma-new2}
Give $A_w=\{a_1,a_2,\dots,a_s\}$ with $a_1< a_2<\dots<a_s$ and $A_{w-1}=\{b'_1,b'_2,\dots,b'_r\}$ with $b'_1< b'_2<\dots<b'_r$. In an OMMD if $p_i$ denotes the $i$th point in $A_w$ with $demand(p_i)\ge k+1$ that are affected by the same subsets of $A_{w-1}$. Assume that we have computed the separating points of $A_w$ for the first to $k$th demands, and now we want to compute the separating point of $A_w$ for the $k+1$th demands. Assume that we want to examine that if $p_h$ is the separating point or not. So, we must match $p_1, \dots, p_{h-1}$ with the points in $A_{w-1}$ as follows. There exist three cases:
\begin{itemize}
\item There exists $a \in A_w$ with $deg(a)>demand(a)$. Remove an arbitrary pair $(a,bb)$ form matching and add $(p_i,bb)$ to the matching.

 \item $p_{i-1}$ is matched to a point $b'' \in A_{w-1}$ with $deg(b'')>demand(b'')$. Then $p_i$ is matched to the largest unmatched point in $A_{w-1}$.
 \item $p_{i-1}$ is matched to a point $b'' \in A_{w-1}$ with $deg(b'')=demand(b'')$. $p_i$ tests whether be matched to the largest unmatched point $b'_q$ in $A_{w-1}$ with $deg(b'_q)>demand(b'_q) $ or the largest unmatched point $b'_l$ in $A_{w-1}$ with $deg(b'_l)=demand(b'_l)$.

\end{itemize}

\end{lemma}

\begin{proof}

This Lemma is proved similar to Lemma \ref{lemma-new}.
\end{proof}

We examine $p_1, p_2, p_3, \dots$ using Lemma \ref{lemma-new2} to find the separating point.

\end{description}

\item[Case B:] $w>0$, $\max^i_{j=1}{{\beta }_j}>s$. In this case, the number of the points in $A_w$ is less than the maximum demand number of the points in $A_{w+1}$, so by Corollary \ref{corlem5} we should seek the previous sets to satisfy demands of the points $b_1,b_2,\dots,b_i$.

This backward process is followed until finding the first set, called $A_{w'}$, that can satisfy the demands of the points in the set $A_{w+1}$. Then, we must find OMMD for the sets $A_{w-1} \dots A_{w'}$ again. It is possible that we do not reach such a set $A_{w'}$, in this situation $C\left(b_i,k\right)=\infty $ for all $1\le k\le {\beta}_i$.

\end{description}

\textbf{Final step.} In this step, we consider the situation where there are not enough points in $A_{w+1}$ for demands of the points $a\in A_w$ for $w \geq 0$, and so by Corollary \ref{corlem5}, we must seek the partitions $A_{w+3},A_{w+5},\dots$ for finding new points.

This forward process is followed until finding the first partition, called $A_{j'}$, which can satisfy the demands of the points in $A_w$. Then, we must find OMMD for the sets $A_{w+2} \dots A_{j'}$ again. It is possible that there does not exist a set such $A_{j'}$. If $A_{j'}$ exists, we should match the unmatched points in $A''$ with the points in $A_{j'}$ as follows.

\begin{lemma}
\label{lem8}
Given $A_w={a_1,a_2, \dots, a_s}$, let $C(a_i,k)$ be the cost of satisfying $k$ number of demands of the first $i$ points in $A_w$ with the points in $A_{j}$ with $j<i$. That is, assume that $C(a_i,k)$ denotes the cost of matching $k$ demands of $a_1, a_2,\dots,a_i$ with the points $p$ where $p<=a_1$. Also let $Cost(a_i,k-j)$ be the cost of matching $k-j$ demands of $a_i$ with the points $p$ where $p<=a_1$.
Then we have:
$$C(a_i,k)=\min^k_{j=1}(C(a_{i-1},j)+Cost(a_i,k-j)).$$

\end{lemma}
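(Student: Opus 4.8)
The plan is to prove the recurrence as an optimal-substructure identity, establishing the two inequalities $C(a_i,k)\ge \min_{j}(C(a_{i-1},j)+Cost(a_i,k-j))$ and $C(a_i,k)\le \min_{j}(C(a_{i-1},j)+Cost(a_i,k-j))$ separately. The key structural observation I would record first is that, because every pair counted in $C(a_i,k)$ has its right endpoint in $\{a_1,\dots,a_i\}$ and its left endpoint among the points $p\le a_1$, the cost of such a matching is exactly $\sum_\ell d_\ell\, a_\ell-\sum_{p} p$, where $d_\ell$ is the number of demands carried by $a_\ell$ and the second sum runs over the chosen left endpoints counted with multiplicity. Consequently any matching realising $C(a_i,k)$ splits canonically into the pairs incident to $a_i$ and the pairs incident to $a_1,\dots,a_{i-1}$, and its total cost is the exact sum of the costs of these two parts. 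I would then dispose of the base case $i=1$, where $C(a_1,k)=Cost(a_1,k)$, and fix the conventions $Cost(a_i,0)=0$ and $C(a_{i-1},0)=0$ needed to cover the extreme values of the index $j$.

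For the inequality $\ge$ I would take a minimum-cost matching $M$ attaining $C(a_i,k)$ and let $k-j^\star$ be the number of demands of $a_i$ satisfied in $M$, so that the remaining $j^\star$ demands are carried by $a_1,\dots,a_{i-1}$. Restricting $M$ to the pairs incident to $a_1,\dots,a_{i-1}$ gives a feasible way of satisfying $j^\star$ demands of the first $i-1$ points using points $p\le a_1$, so the cost of this restriction is at least $C(a_{i-1},j^\star)$; symmetrically, the pairs incident to $a_i$ form a feasible solution of cost at least $Cost(a_i,k-j^\star)$. By the additivity recorded above, the cost of $M$ equals the sum of these two parts, whence $C(a_i,k)\ge C(a_{i-1},j^\star)+Cost(a_i,k-j^\star)\ge \min_{j}(C(a_{i-1},j)+Cost(a_i,k-j))$.

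For the reverse inequality I would, for each admissible $j$, glue an optimal solution for $C(a_{i-1},j)$ to an optimal solution for $Cost(a_i,k-j)$ and argue that their union realises $k$ demands over $a_1,\dots,a_i$ at cost $C(a_{i-1},j)+Cost(a_i,k-j)$. I expect this gluing to be the main obstacle, since both subsolutions draw from the same pool of left points $p\le a_1$, so a naive superposition may over-use the allowance of some $p$ or create crossing pairs. To resolve it I would invoke Lemma \ref{lemma-lem10}: for $a<a'\le b<b'$ the pairs $(a,b),(a',b')$ and $(a,b'),(a',b)$ have equal cost, so the total cost depends only on the multiset of left endpoints consumed and not on which of $a_1,\dots,a_i$ each is assigned to. Using this together with the non-crossing structure of Lemma \ref{lem6}, I would repeatedly swap incident pairs to turn the superposition into a crossing-free, conflict-free assignment of the same total cost, which is then a legitimate witness for $C(a_i,k)$. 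Combining the two inequalities yields the claimed equality; the only residual bookkeeping is to confirm the stated range $1\le j\le k$ against the boundary conventions introduced at the outset.
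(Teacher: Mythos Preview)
Your proposal is correct and follows essentially the same approach as the paper: both establish the recurrence by conditioning on how many of the $k$ demands are carried by $a_i$ versus by $a_1,\dots,a_{i-1}$. The paper's proof is a one-paragraph sketch that invokes Lemma~\ref{lem6} to justify the split, whereas you spell out both inequalities separately and place the appeal to Lemmas~\ref{lemma-lem10} and~\ref{lem6} in the gluing ($\le$) step, but the underlying decomposition is the same.
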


\begin{proof}

Given $A_i={a_1,a_2, \dots, a_q}$, for $a_i$ there exists $k$ cases, depending on satisfying its demands with $p<a_i $ or $p'>a_i$. Assume that $j$ demands of $a_i$ is satisfied with the points $p<a_i $. Then, by Lemma \ref{lem6} the points $a_1, a_2, \dots, a_{i-1}$ have only $k-j$ options for satisfying their demands (see Figure \ref{fig:333}), since $j$ demands of them must be satisfied with the $j$ points $a_i$ is satisfied with them.

\begin{figure}
\vspace{-3cm}
\hspace{-6cm}
\resizebox{2\textwidth}{!}{%
  \includegraphics{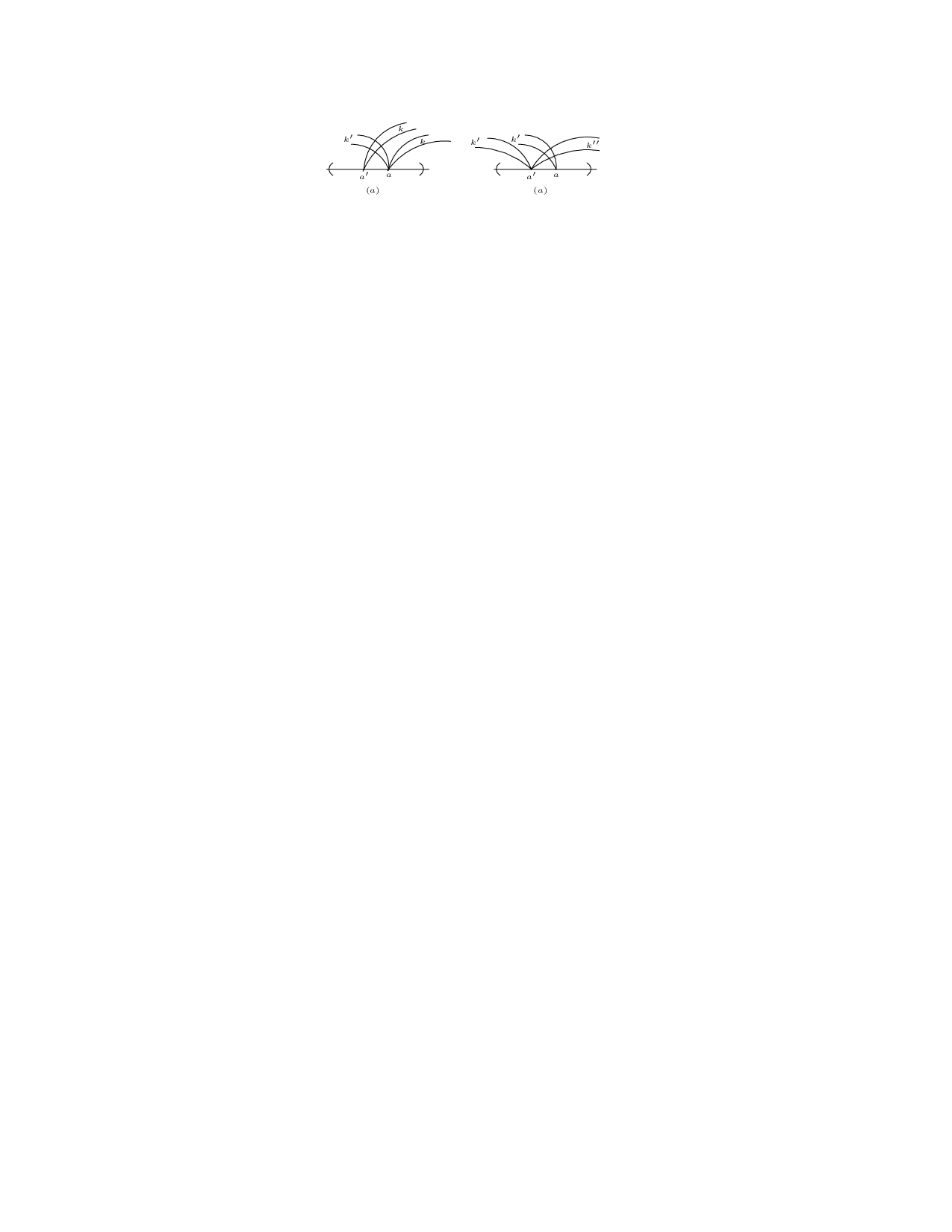}
}
\vspace{-27cm}
\caption{Intersection.}
\label{fig:333}       
\end{figure}

\end{proof}
\begin{lemma}
\label{lem-tn}
Given the partition $A_w={a_1,a_2, \dots, a_s}$ of points with maximum demand $k$, let $T(s,k)$ be the number of options for satisfying $k$ demands of $a_1,a_2, \dots, a_i$, then:
$$T(s,k)=\sum^k_{j=1}{T(s-1,j)}$$
\end{lemma}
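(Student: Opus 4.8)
The plan is to prove the recurrence by a direct counting argument that parallels the cost recurrence of Lemma~\ref{lem8}. Recall that Lemma~\ref{lem8} decomposes the optimal cost as
\[
C(a_s,k)=\min_{j=1}^{k}\bigl(C(a_{s-1},j)+Cost(a_s,k-j)\bigr),
\]
where $j$ records how many of the $k$ demands are absorbed by the subproblem on the first $s-1$ points $a_1,\dots,a_{s-1}$, while the remaining $k-j$ demands are handled by the last point $a_s$. My strategy is to show that the \emph{number} of admissible configurations obeys the same decomposition, but with the minimisation replaced by a sum: optimisation selects one branch among the values of $j$, whereas counting must aggregate over all of them.

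First I would fix the meaning of an \emph{option}: a way of assigning, to each of $a_1,\dots,a_s$, the points lying to the left (i.e.\ $p\le a_1$) that satisfy its demands, consistent with the non-crossing structure guaranteed by Lemma~\ref{lem6}. I would then classify every option by the integer $j\in\{1,\dots,k\}$ equal to the number of demands that the block $a_1,\dots,a_{s-1}$ contributes in that option. Since every option has a well-defined such $j$, this yields a partition of the set of all options into $k$ disjoint classes indexed by $j$.

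Next I would establish a bijection between the class indexed by $j$ and the set of options for the shorter prefix $a_1,\dots,a_{s-1}$ at parameter $j$; the latter has cardinality $T(s-1,j)$ by definition. The forward map simply restricts an option to the first $s-1$ points. For the reverse direction, given any prefix option that accounts for $j$ demands, I would argue that there is exactly \emph{one} way to extend it to $a_s$: by Lemma~\ref{lem6} the $j$ points already used by $a_1,\dots,a_{s-1}$ are forced to serve $a_s$ as well for those shared demands, and the placement of the remaining $k-j$ demands of $a_s$ is then fixed by the non-crossing constraint. Hence the extension contributes a factor of exactly $1$ (which is why no term multiplying $T(s-1,j)$ appears in the recurrence), so each class has cardinality $T(s-1,j)$. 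Applying the sum rule over the disjoint classes $j=1,\dots,k$ gives
\[
T(s,k)=\sum_{j=1}^{k}T(s-1,j),
\]
grounded by the base case $T(1,k)=k$, since a single point with $k$ demands admits exactly $k$ left/right splits, consistent with $T(0,j)=1$ and $\sum_{j=1}^{k}T(0,j)=k$.

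The main obstacle I anticipate is verifying that the extension of a prefix option to $a_s$ is genuinely \emph{unique}, i.e.\ that the restriction map is a bijection rather than merely a surjection: I must rule out two distinct completions on $a_s$ that restrict to the same prefix option. This is exactly where Lemma~\ref{lem6} does the heavy lifting, forcing the $j$ shared points to be common to $a_s$ and its predecessors and thereby eliminating any spurious freedom in how $a_s$ is matched. Making this forcing precise, and confirming that distinct $j$ really do yield disjoint classes with no configuration counted twice, is the delicate part of the argument.
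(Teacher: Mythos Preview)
Your approach is essentially the same as the paper's: both argue by conditioning on how many demands the last point $a_s$ sends to the previous partitions and then invoke the decomposition of Lemma~\ref{lem8} to reduce to the prefix $a_1,\dots,a_{s-1}$. The paper's proof is a one-line pointer to Lemma~\ref{lem8}, whereas you spell out the bijection, the uniqueness of the extension via Lemma~\ref{lem6}, and the base case; but the underlying idea and the role of the index $j$ are identical.
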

\begin{proof}
For demands of $a_s$, it can satisfy $j$ number of its demands by previous partitions. As Lemma \ref{lem8} this lemma is proved.

\end{proof}

So, by Lemma \ref{lem8} using dynamic programming we have $n*k$ states, that one of them is \textit{the optimal state}, denoted by $optimal(b_t,k)$. That is, the state that minimizes the cost of matching.

We can compute an OMMD by finding the optimal state $optimal(b_t,k)$ of each set $A_i$.

 \section{Concluding Remarks}
 \label{Conclsection}

Many-to-many point matching with demands MMD is a many-to-many matching where each point has a demand. We studied the one dimensional MMD, called OMMD, where we match two point sets on the line. We presented an algorithm for getting an OMMD between two point sets with total cardinality $n$ in $O(n^2)$ time.




\end{document}